\documentclass[12pt]{article}

\usepackage{arxiv}

\usepackage[utf8]{inputenc} % allow utf-8 input
\usepackage[T1]{fontenc}    % use 8-bit T1 fonts
\usepackage{hyperref}       % hyperlinks
\usepackage{url}            % simple URL typesetting
\usepackage{booktabs}       % professional-quality tables
\usepackage{amsfonts}       % blackboard math symbols
\usepackage{nicefrac}       % compact symbols for 1/2, etc.
\usepackage{microtype}      % microtypography
\usepackage{lipsum}		% Can be removed after putting your text content
\usepackage{graphicx}
\usepackage{doi}
\usepackage{scalerel,stackengine}
\usepackage{graphicx,multicol}
\usepackage{multirow} 
\usepackage{indentfirst,csquotes}
 \usepackage{array} 
\usepackage{abstract}

\usepackage[caption=false]{subfig}
\usepackage{amssymb,amsthm,amsmath}
\usepackage{xcolor,paralist,hyperref,fancyhdr,etoolbox}
\newtheorem{theorem}{Theorem}[]

\newtheorem{lemma}[]{Lemma}
\usepackage[round]{natbib} 

\usepackage[utf8]{inputenc}
\numberwithin{equation}{section}

\usepackage{hyperref}

\hypersetup{ colorlinks=true, linkcolor=black, filecolor=black, urlcolor=black,citecolor = blue }

\usepackage{lipsum}
\usepackage{multicol}

\usepackage{listings}
\title{Hyperbolic statistical inference for Treatment Effects with Circular biomarker of astigmatism}
\date{} 					% Or removing it

\author{ {Buddhananda Banerjee}\thanks{\texttt{bbanerjee@maths.iitkgp.ac.in}} \\
\And
{Surojit Biswas  }\thanks{ \texttt{surojit23@iitkgp.ac.in/surojit23.iitkgp@gmail.com }} \\
	\And
    {Daitari Prusty}\thanks{\texttt{25MA91F0225@kgpian.iitkgp.ac.in} } \\
	 \AND
	Department of Mathematics\\  Indian Institute of Technologt Kharagpur, India-$721302$ \\ 
}

\date{}

\renewcommand{\shorttitle}{Hyperbolic statistical inference for circular biomarker}

\hypersetup{
pdftitle={A template for the arxiv style},
pdfsubject={q-bio.NC, q-bio.QM},
pdfauthor={David S.~Hippocampus, Elias D.~Striatum},
pdfkeywords={First keyword, Second keyword, More},
}

\begin{document}
\maketitle

\begin{abstract}

Circular biomarkers arise naturally in many biomedical applications, particularly in ophthalmology, where angular measurements such as astigmatism are routinely recorded. Similar directional variables also occur in the study of human body rotations, including movements of the hand, waist, neck, and lower limbs. Motivated by a clinical dataset comprising angular measurements of astigmatism induced by two cataract surgery procedures, we propose a novel two-sample testing framework for circular data grounded in hyperbolic geometry. Assuming von Mises distributions with either common or group-specific concentration parameters, we embed the corresponding parameter spaces into the Poincar\'e disk, an open unit disk endowed with the Poincar\'e metric.Under this construction, each von Mises distribution is mapped uniquely to a point in the Poincar\'e disk, yielding a continuous geometric representation that preserves the intrinsic structure of the parameter space. This embedding enables direct comparison of group distributions via hyperbolic distances, leading to natural and interpretable test statistics. We develop permutation-based tests for the common concentration case and bootstrap-based procedures for unequal concentrations. Extensive simulation studies demonstrate stable empirical size, strong consistency, and superior asymptotic power compared with existing competing methods. The proposed methodology is illustrated through a detailed analysis of the cataract surgery dataset, including a clinically informed restructuring of the original observations. The results highlight the practical advantages of incorporating hyperbolic geometry into the analysis of circular biomedical data and underscore the potential of geometry-aware inference for directional biomarkers.

\end{abstract}

% keywords can be removed
\keywords{Von Mises distribution, Hyperbolic geometry, Poincar\'e disk, Poincar\'e metric,  Permutation test, Bootstrap test, Biomarker}

\section{Introduction} 

\subsection{Clinical motivation and data description}
\label{astigmatism}
A cataract is characterised by progressive opacification of the crystalline lens, resulting in partial or complete visual impairment. Cataract extraction with intraocular lens implantation is the standard treatment and is among the most frequently performed surgical procedures worldwide. Contemporary surgical techniques emphasise rapid visual rehabilitation and minimal operative trauma. Nevertheless, the removal of dense or sclerotic cataracts continues to present technical challenges, particularly in resource-limited settings.
Small incision cataract surgery (SICS) is a widely adopted technique in which the opacified lens is removed through a self-sealing scleral incision. In manual SICS, the incision typically ranges from 5.5~mm to 7~mm. Following prolapse of the lens nucleus into the anterior chamber, extraction is commonly performed using the irrigating VERTICS technique, which combines controlled fluid dynamics with mechanical assistance \cite{srinivasan2009nucleus}. Alternative approaches include the SNARE-based method introduced by \cite{keener1995nucleus}, in which the nucleus is divided and removed using a loop fashioned from steel wire.
Phacoemulsification (PE), introduced by \cite{kelman2018phaco}, employs ultrasonic energy to fragment the lens, allowing removal through a smaller incision and often resulting in faster visual recovery. Torsional PE further reduces surgical energy by using oscillatory rather than longitudinal motion. Despite these advantages, PE is less frequently employed in many developing regions owing to its higher cost, longer training requirements, and reduced effectiveness in eyes with advanced nuclear sclerosis. Consequently, SICS remains the preferred surgical modality in such contexts.
Corneal incisions made during cataract surgery may alter corneal curvature, leading to surgically induced astigmatism. Astigmatism results in blurred or distorted vision due to unequal refractive power across different meridians of the cornea. Clinically, this may manifest as reduced visual clarity along specific orientations.  In regular astigmatism, the two main meridians are perpendicular to each other, whereas in irregular astigmatism, they are not. Regular astigmatism is commonly classified into three types based on the orientation of the steepest corneal meridian. In \textit{with-the-rule (WTR) astigmatism}, the vertical meridian is steeper than the horizontal, resembling an American football lying on its side; as a result, vertical lines are perceived more clearly than horizontal ones. In \textit{against-the-rule (ATR) astigmatism}, the horizontal meridian is the steepest, analogous to an American football standing on its end, leading to clearer perception of horizontal lines. \textit{Oblique astigmatism} occurs when the steepest meridian lies in an oblique orientation, typically between $(120^\circ)$ and $(150^\circ)$ or between $(30^\circ)$ and $(60^\circ)$. This form is often more visually disturbing because it distorts objects that are usually aligned along vertical or horizontal axes. This makes it more disorienting than either WTR or ATR astigmatism.

% \begin{center}
%    { [DIAGRAM OF ROSE PLOTS OF TWO TREATMENTS given in data analysis section]} 
% \end{center}

Optimal postoperative visual outcomes are generally associated with astigmatism axes near $0^{\circ}$, $90^{\circ}$, or $180^{\circ}$. To reflect these clinically meaningful orientations while accounting for the circular nature of angular data, we apply a fourfold angular transformation (mod $2\pi$). This approach has been employed previously to evaluate surgical techniques, optimise allocation strategies, and predict the axix of astigmatism in cataract surgery studies \cite[see][]{biswas2016comparison,biswas2015response, banerjee2026intrinsic}. 
% The dataset analysed here originates from a prospective, randomised clinical trial conducted at the Disha Eye Hospital and Research Centre, Barrackpore, India, between 2008 and 2010 \cite{bakshi2010evaluation}. Forty patients (40 eyes) were randomly assigned to undergo SICS using either the SNARE technique or the irrigating VERTICS technique, with postoperative outcomes assessed over a three-month follow-up period.
\subsection{Existing method}
\label{Existing method}

\cite{biswas2016comparison} and \cite{bakshi2010evaluation} present a carefully developed methodological framework for two-sample inference with circular responses, motivated by clinical trials in which angular outcomes arise naturally and treatment efficacy is determined relative to a clinically preferred direction. Their work addresses an important gap in the statistical literature, as much of the classical methodology for circular data focuses on direct comparison of mean directions rather than proximity to a target direction carrying substantive scientific meaning.

 In this context,they  provided a principled framework for two-sample inference with angular data by modelling treatment-specific responses as von Mises distributions,
\[
\Theta_1 \sim \mathrm{VM}(\mu_1,\kappa_1), \qquad
\Theta_2 \sim \mathrm{VM}(\mu_2,\kappa_2),
\]
without loss of generality, take the preferred direction to be $\mu_0=0$. Rather than comparing $\mu_1$ and $\mu_2$ directly, treatment comparison is formulated in terms of the expected circular distance from $\mu_0$, thereby aligning statistical inference more closely with clinical interpretability. Two distance measures are considered: the geodesic distance
\[
d_1(\theta,0)=\min\{|\theta|,2\pi-|\theta|\},
\]
and the cosine-based distance
\[
d_2(\theta,0)=1-\cos\theta.
\]
While $E\{d_1(\Theta)\}$ does not admit a closed-form expression, the von Mises model yields a simple and tractable form for
\[
E\{d_2(\Theta)\}=1-A(\kappa)\cos\mu,
\qquad
A(\kappa)=\frac{I_1(\kappa)}{I_0(\kappa)},
\]
which provides a natural basis for inference, where $I_r(\cdot)$ denotes the modified Bessel function of the first kind of
order $r$.

Under the assumption of a common concentration parameter, $\kappa_1=\kappa_2=\kappa$, equality of expected distances from the preferred direction reduces to testing
\[
H_0:\ \cos\mu_1=\cos\mu_2.
\]
Exploiting classical asymptotic results for sample trigonometric moments, the authors use the joint asymptotic normality of sample sine and cosine means to derive a Z-type test statistic based on the difference of sample mean cosines. The resulting procedure is asymptotically normal with a consistently estimable variance, yielding a simple yet theoretically sound large-sample test for two-sample comparison of circular outcomes.

Recognising that the assumption of equal concentration parameters may be untenable in practice, Biswas et al.~extend their methodology to the more challenging case where $\kappa_1 \neq \kappa_2$. In this setting,
\[
E(\cos \Theta_1)-E(\cos \Theta_2)
= A(\kappa_1)\cos\mu_1 - A(\kappa_2)\cos\mu_2,
\]
so direct comparison of cosine means is no longer valid. To address this difficulty, the authors normalise the sample cosine means using consistent estimators of $A(\kappa)$ and apply a variance-stabilising transformation via the $\arccos(\cdot)$ function. This leads to a quadratic-form test statistic which, under suitable regularity conditions, is shown via the delta method to follow an asymptotic $\chi^2_1$ distribution under the null hypothesis. This extension represents a notable methodological contribution, as formal procedures for treatment comparison relative to a preferred direction under unequal concentration had received little attention in the applied circular data literature.

To complement the parametric methods, the authors also consider a nonparametric benchmark based on the Mann-Whitney-Wilcoxon test applied to circular distances from the preferred direction. Although distribution-free procedures may entail some loss of efficiency when parametric assumptions are valid, their inclusion provides a useful robustness check and a valuable point of reference for applied researchers.

Overall, the methodological development of \cite{biswas2016comparison} is rigorous and well motivated, combining clinically meaningful distance-based formulations with sound asymptotic theory. The proposed framework offers a principled and practically relevant approach to two-sample inference for circular responses, well aligned with the needs of applied medical statistics.

Notwithstanding these contributions, existing procedures remain largely dependent on parametric assumptions and asymptotic approximations, with inference driven primarily by trigonometric moments. In particular, there is a lack of stability with empirical labels when concentration parameters differ which is natural phenomena in real life scenario.  For instance in the astigmatism data that they have analyzed has doffent concentration parameters in two compeeting groups,  highlighting the need for alternative approaches that retain clinical interpretability while offering improved robustness.

\subsection{Motivating Poincar\'e disk representation of von Mises parameters}
\label{hyperbolicVM}

We propose a novel and geometrically principled framework for two-sample inference with circular data, motivated by biomedical applications in which angular biomarkers play a central role. The methodology is distinguished by its elegant integration of directional statistics, hyperbolic geometry, and permutation-based inference, yielding a powerful and interpretable testing procedure with strong finite-sample properties.

Let $\Theta_{gi} \in [0,2\pi)$ denote circular observations from group $g \in \{1,2\}$, $i=1,\dots,n_g$. We assume that within each group the data follow a von Mises distribution,
$
\Theta_{gi} \sim \mathrm{VM}(\mu_g,\kappa_g),
$
where $\mu_g$ is the mean direction and $\kappa_g$ is the concentration parameter, allowing for either a common or group-specific concentration structure. Rather than comparing groups directly on the circle, the proposed method operates on the \emph{parameter space} of the von Mises family, exploiting its natural geometric representation.

Specifically, for each pair $(\mu_g,\kappa_g)$ the von Mises distribution on the circle is parametrised by the mean direction $\mu_g \in [0,2\pi)$ and the concentration parameter $\kappa_g \geq 0$. A natural and bijective mapping from these parameters to the Poincar\'e disk $\mathbb{D} = \{ z \in \mathbb{C} : |z| < 1 \}$ is given by via the transformation

\begin{equation}
    \xi_g=\xi(\mu_g,k_g)= r({\kappa_g})e^{i\mu_g}, \mbox{~where~} r({\kappa_g})=\frac{\kappa_g}{1+\kappa_g}\in[0,1) , g=1,2.\label{Poincare_disk}
\end{equation}

which embeds the circular model parameters into the Poincar\'e disk endowed with the hyperbolic metric

\begin{equation}
d_{\mathbb{H}}(w_1,w_2)
= \cosh^{-1}\!\left(
1 + \frac{2|w_1-w_2|^2}{(1-|w_1|^2)(1-|w_2|^2)}
\right) , \mbox{~for all~} w_1,w_2\in \mathbb{D} \label{Poincare_metric}
\end{equation}

This representation provides a unified and non-Euclidean geometry in which both location and concentration differences are jointly captured, a key strength of the proposed approach. See Figure \ref{eu_po_vm} for illustration. 
\begin{figure}[h!]
\centering
{\includegraphics[width=0.9\textwidth, height=0.45\textwidth]{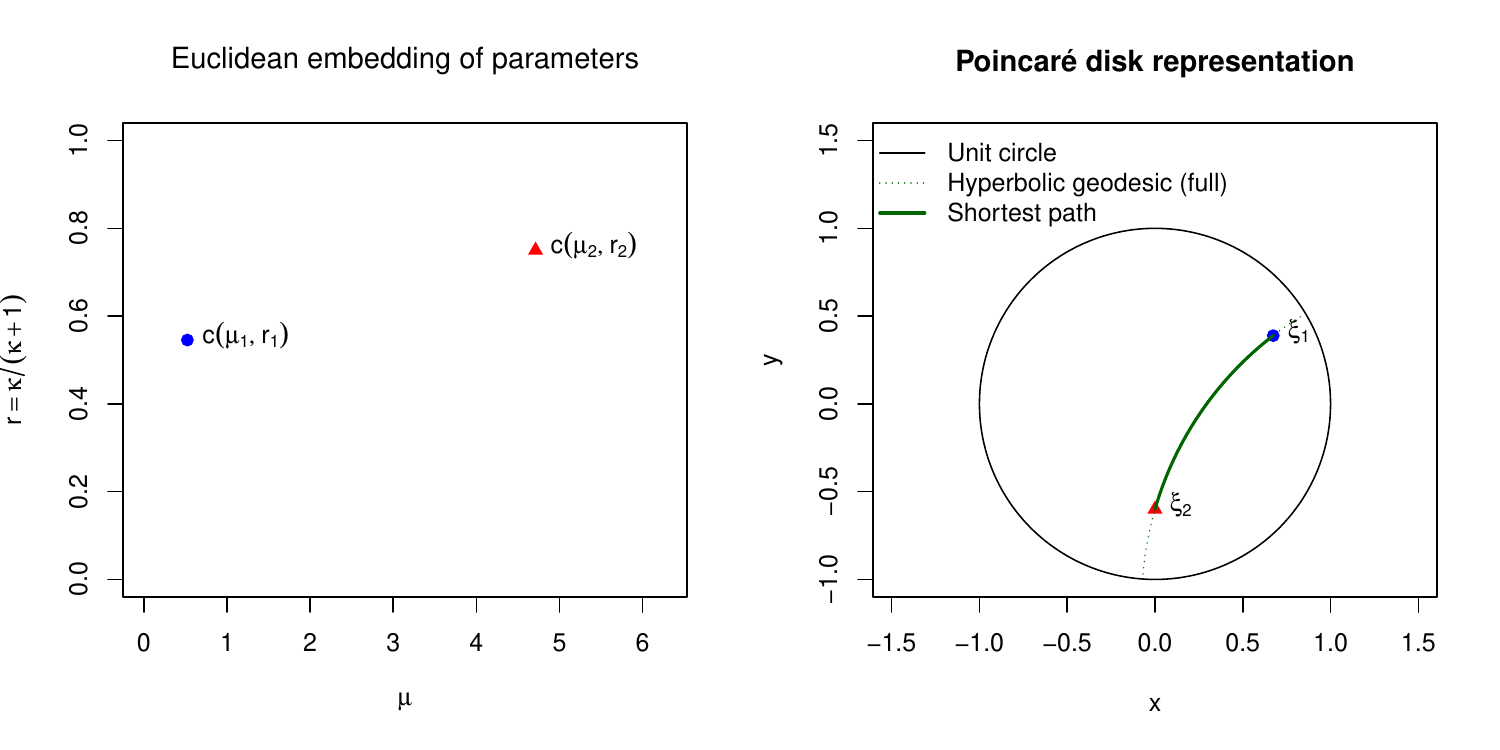}}
\caption{Mapping of the von Mises parameters to Poincar\'e disk}
\label{eu_po_vm}
\end{figure}

When $\kappa = 0$, the von Mises distribution reduces to the uniform distribution on the circle. Under the mapping,
$\xi = r(0) e^{i\mu} = (0,0)$
independent of $\mu$. This correctly reflects the fact that, when the concentration is zero, the mean direction is no longer meaningful; the distribution is uniform and location-invariant. Thus, the origin of the Poincar\'e disk corresponds uniquely to the uniform circular distribution. For small positive $\kappa$, the mapped points lie very close to the origin, with their direction encoding the mean angle $\mu$, while the distance from the origin is an increasing function of the concentration. This ensures that the Poincar\'e disk faithfully captures both weakly concentrated distributions and their mean directions in a continuous manner as $\kappa_g \to 0^+$.
As $\kappa \to \infty$, the von Mises distribution becomes increasingly concentrated around its mean $\mu_g$, approaching a point mass. The mapping satisfies
$
\lim_{\kappa_g \to \infty} r(\kappa_g) = 1,
$
so the corresponding points approach the boundary of the Poincar\'e disk. The angular coordinate remains $\mu_g$, encoding the mean direction, while the distance from the origin reflects the extreme concentration. Hence, the boundary of the disk naturally represents highly concentrated, nearly deterministic distributions. The origin $(0,0)$ represents maximum uncertainty (uniform distribution). 
The proposed bijection (Equation \ref{Poincare_disk}) for each von Mises distribution corresponds to a unique point in the Poincar\'e disk, and vice versa. Furthermore, the mapping is continuous in both parameters. Small changes in contestation or mean direction result in small changes in the disk, preserving the natural geometry of the parameter space.

Group differences are quantified through the hyperbolic distance between estimated parameter points, yielding a test statistic involving the plug-in estimator obtained from maximum likelihood estimates of $(\mu_g,\kappa_g)$. To avoid reliance on asymptotic approximations and to ensure robustness in moderate sample sizes, inference is conducted using a permutation test that respects the circular structure of the data.
Extensive simulations show that the proposed test attains accurate size over a broad range of settings and delivers improved power over existing circular two-sample procedures, particularly under unequal concentration or joint location–concentration alternatives. Importantly, the method remains stable for small to moderate sample sizes, underscoring its practical relevance for biomedical applications.

The methodology is further illustrated using astigmatism data from cataract surgery, where angular responses are clinically meaningful and inherently circular. By embedding the problem within a hyperbolic geometric framework, the proposed approach achieves both conceptual coherence and enhanced inferential performance, establishing a flexible and effective tool for analysing circular biomarkers in medical research. \subsection{Organization of the paper}
 Section \ref{astigmatism} introduced the motivating astigmatism dataset arising from two cataract surgery procedures and outlines the statistical challenges inherent to circular biomarkers. Preceded  by the existing work in Section\ref{Existing method},  Section \ref{hyperbolicVM} develops the proposed geometric framework, detailing the hyperbolic embedding of von-Mises distributions into the Poincar\'{e} disk and establishing key properties of the mapping, including uniqueness, continuity, and interpretability. The rest of the paper paper is organized as follows.
 
 Section \ref{Mainresult} presents the proposed two-sample testing procedures, covering both equal and unequal concentration settings and describing the associated permutation and bootstrap implementations. Section \ref{simulation} reports an extensive simulation study assessing empirical size stability and power across a wide range of sample sizes and concentration levels, with systematic comparisons against existing methods. Section \ref{data_analyis} applies the methodology to the cataract surgery data, including a clinically motivated restructuring of the dataset and a substantive interpretation of the results. Finally, Section \ref{discussion} concludes with a discussion of the methodological contributions, practical implications, and potential extensions of hyperbolic geometric inference for circular biomedical data. The necessary technical proofs are available  in Appendix (Section \ref{appendix}).

\section{Methodology and Main result}
\label{Mainresult}
As already introduced, let $\Theta_{gi} \in [0,2\pi)$ denote circular observations from treatment group
$g \in \{1,2\}$, for $i=1,\dots,n_g$. It is assumed that, within each group, the
observations are independent and follow a von Mises distribution,
$
\Theta_{gi} \sim \mathrm{VM}(\mu_g,\kappa_g),
$
where $\mu_g$ denotes the mean direction and $\kappa_g$ the concentration
parameter of group $g$. The corresponding density function is given by
\begin{equation}
f(\theta \mid \mu_g,\kappa_g)
=
\frac{1}{2\pi I_0(\kappa_g)}
\exp\!\big\{\kappa_g \cos(\theta-\mu_g)\big\},
\qquad \theta \in [0,2\pi), \label{vmpdf}
\end{equation}
where $I_0(\cdot)$ denotes the modified Bessel function of the first kind of
order zero.

In many biomedical applications, the objective is not to compare mean
directions directly, but rather to assess how closely each treatment aligns
with a clinically desired and pre-specified direction, denoted by $\mu_0$.
Within the Poincar\'e disk model $\mathbb{D}$ of hyperbolic geometry, this
preferred direction is represented by the radius
$R_{\mu_0}=\{t e^{i\mu_0} : 0 \le t < 1\}$

Let $\xi_g \in \mathbb{D}$ denote the hyperbolic embedding corresponding to
treatment group $g$. Treatment~1 is preferred over treatment~2 if the
hyperbolic (Poincar\'e) distance between $\xi_1$ and the target radius $R_{\mu_0}$
is smaller than that between $\xi_2$ and $R_{\mu_0}$, that is,
\begin{equation}
\min_{0 \le t < 1} d_{\mathbb{H}}(\xi_1, t e^{i\mu_0})
\;<\;
\min_{0 \le t < 1} d_{\mathbb{H}}(\xi_2, t e^{i\mu_0}),
\label{condition}
\end{equation}
where $d_{\mathbb{H}}(\cdot,\cdot)$ denotes the Poincar\'e distance on
$\mathbb{D}$  as defined in the  Equation \ref{Poincare_metric}.  
Moreover, we have, 
\begin{lemma}[Uniqueness of Minimizer]
For each $\xi \in \mathbb{D}$, the minimizer $\mathcal{P}_{\mu_0}(\xi) = \displaystyle\arg\min_{0 \le t < 1} d_{\mathbb{H}}(\xi, te^{i\mu_0})$ is unique.
\label{uniquesness lemma}
\end{lemma}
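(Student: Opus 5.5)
Since $\cosh^{-1}$ is strictly increasing on $[1,\infty)$, minimizing $d_{\mathbb{H}}(\xi, te^{i\mu_0})$ over $t\in[0,1)$ is the same as minimizing
\[
F(t)=\frac{|\xi-te^{i\mu_0}|^2}{1-t^2},
\]
because the factor $1-|\xi|^2$ is a positive constant. By rotation invariance of the metric (multiply by $e^{-i\mu_0}$), I will assume $\mu_0=0$. I will write $\xi=x+iy$ with $x^2+y^2<1$. Then
\[
F(t)=\frac{(x-t)^2+y^2}{1-t^2}=\frac{|\xi|^2-2xt+t^2}{1-t^2}.
\]
The plan is to show that $F$ has exactly one minimizer on $[0,1)$, using existence plus a sign analysis of $F'$.

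For existence, note that $F$ is continuous on $[0,1)$. As $t\to1^-$, the numerator tends to $(x-1)^2+y^2=|1-\xi|^2>0$, since $\xi\neq1$ lies in the open disk, so $F(t)\to\infty$. Hence a minimizer exists on $[0,1)$.

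Next I compute the derivative. The numerator of $F'(t)$ is
\[
(2t-2x)(1-t^2)+2t(|\xi|^2-2xt+t^2)=2\bigl(xt^2-(1+|\xi|^2)t+x\bigr)\cdot(-1).
\]
I will double-check this expansion. Expanding directly gives $2t-2x-2t^3+2xt^2+2t|\xi|^2-4xt^2+2t^3=2\bigl(-xt^2+(1+|\xi|^2)t-x\bigr)$, which agrees. So the sign of $F'$ equals the sign of
\[
q(t)=-x t^2+(1+|\xi|^2)t-x.
\]

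I then split into cases on the sign of $x$.

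\textbf{Case $x\le 0$.} Here $q(t)\ge(1+|\xi|^2)t\ge0$, with $q(t)>0$ for $t>0$. So $F$ is strictly increasing and the unique minimizer is $t=0$.

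\textbf{Case $x>0$.} Now $q(0)=-x<0$ and $q(1)=(1-x)^2+y^2>0$. The product of the roots of $q$ is $1$, so exactly one root $t^*$ lies in $(0,1)$; the other lies in $(1,\infty)$. Hence $F$ decreases on $[0,t^*]$ and increases on $[t^*,1)$, so $t^*$ is the unique minimizer. Explicitly,
\[
t^*=\frac{(1+|\xi|^2)-\sqrt{(1+|\xi|^2)^2-4x^2}}{2x}.
\]

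The main obstacle is only the careful sign bookkeeping in $F'$ and confirming that the admissible root is strictly less than $1$. The product-of-roots observation settles the latter: the roots are positive and reciprocal, and not both equal to $1$ because $q(1)>0$.

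As an alternative route, one could invoke the strict geodesic convexity of the distance function to a geodesic in $\mathbb{H}^2$, which is a CAT($-1$) space. The diameter through $0$ and $e^{i\mu_0}$ is a geodesic, and the radius is a closed convex subset of it, so the projection is unique. The explicit computation above avoids needing that machinery.
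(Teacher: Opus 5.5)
Your proof is correct, but it takes a different route from the paper. The paper argues synthetically. It treats $R_{\mu_0}$ as a geodesic ray in the negatively curved space $(\mathbb{D},d_{\mathbb{H}})$ and uses that the squared distance to $\xi$ is strictly convex along it. Together with continuity and $h_\xi(t)\to\infty$ as $t\to1^-$, this allows at most one minimizer.

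You instead rotate to $\mu_0=0$ and pass to the monotone surrogate $F(t)=|\xi-t|^2/(1-t^2)$. You then read off the monotonicity of $F$ from the sign of $q(t)=-xt^2+(1+|\xi|^2)t-x$. When $x>0$, the facts $q(0)<0<q(1)$ and that the product of the roots is $1$ place exactly one critical point in $(0,1)$.

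The paper's argument is shorter and applies unchanged to projections onto any closed convex subset of a Hadamard manifold. As written, though, it is loose in two places:
\begin{itemize}
\item Convexity holds in the hyperbolic arclength $s=2\tanh^{-1}t$, not in the Euclidean parameter $t$. Only uniqueness of the minimizer survives that reparametrization.
\item Strict convexity of $h_\xi^2$ does not give strict convexity of $h_\xi$; it fails when $\xi$ lies on the diameter containing $R_{\mu_0}$. Uniqueness already follows from $h_\xi^2$ alone, so the conclusion survives.
\end{itemize}

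Your computation avoids these issues and also yields more. It identifies the minimizer explicitly in every case, including $\Re(\xi)\le 0$, where the minimizer is $t=0$; at $\Re(\xi)=0$ the formula of Lemma~\ref{lm: Projection zero} is of the form $0/0$. It also shows that the chosen root is the global minimizer rather than just a critical point. So your argument doubles as a rigorous proof of Lemma~\ref{lm: Projection zero}, whose proof in the paper only solves the first-order condition and then clamps.
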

\begin{proof}See the Appendix \ref{pf:uniqueness lemma}. \end{proof}
Denoting  $d_{R_{\mu_0}}(\xi_g)=\displaystyle \min_{0 \le t < 1} d_{\mathbb{H}}(\xi_g, t e^{i\mu_0}) \mbox{~for~} g=1,2;$ we state the null hypothesis $H_0: d_{R_{\mu_0}}(\xi_1)= d_{R_{\mu_0}}(\xi_2)$ against the alternative $H_1: d_{R_{\mu_0}}(\xi_1) \neq d_{R_{\mu_0}}(\xi_2).$  Since the true parameters $(\mu_g,\kappa_g)$ are unknown, they are estimated
using the corresponding maximum likelihood estimators. Specifically, for
$g=1,2$, the maximum likelihood estimators
$(\hat{\mu}_g,\hat{\kappa}_g)$ are defined as
\begin{equation}
(\hat{\mu}_g,\hat{\kappa}_g)
=
\arg\max_{(\mu_g,\kappa_g)}
\prod_{i=1}^{n_g}
f(\theta_{gi}\mid\mu_g,\kappa_g),
\label{mle_vm}
\end{equation}
where $f(\cdot\mid\mu_g,\kappa_g)$ denotes the von Mises density given in
Equation~\eqref{vmpdf}.

Define the circular sample means
\[
\bar{C}_g = \frac{1}{n_g}\sum_{j=1}^{n_g} \cos(\theta_{gj}),
\qquad
\bar{S}_g = \frac{1}{n_g}\sum_{j=1}^{n_g} \sin(\theta_{gj}),
\qquad
\bar{R}_g = \sqrt{\bar{C}_g^2 + \bar{S}_g^2}.
\]
The maximum likelihood estimators admit the closed-form expressions
\[
\hat{\mu}_g
=
\operatorname{Arg}\!\left(
\frac{1}{n_g}\sum_{j=1}^{n_g} e^{i\theta_{gj}}
\right),
\qquad
\hat{\kappa}_g
=
A_1^{-1}(\bar{R}_g),
\]
where $A_1(\kappa)=I_1(\kappa)/I_0(\kappa)$. Further details may be found
in \cite{mardia2000directional}.

In the specific context of the astigmatism data analysed in
Section~\ref{astigmatism}, the preferred direction is $\mu_0=0$, so that
$R_0=\{(t,0):0\le t<1\}$. Consequently, the minimum Poincar\'e distance reduces
to
\[
d_{R_0}(\hat{\xi}_g)
=
d_{\mathbb{H}}\!\left(
\hat{\xi}_g,\,
\mathcal{P}_{R_0}(\hat{\xi}_g)
\right),
\]
where $\mathcal{P}_{R_0}(\hat{\xi}_g)$ denotes the projection of
$\hat{\xi}_g\in\mathbb{D}$ onto $R_0$ and can be obtained by the following lemma.

\begin{lemma}[Projection along zero direction]
\label{lm: Projection zero}
   The projection any point $\xi\in \mathbb{D}$ on  $R_0=\{(t,0):0\le t<1\}$  is given by
given by
\begin{equation}
\mathcal{P}_{R_0}(\xi)
=
\left(
\min\!\left\{
1,\max\!\left\{
0,\;
\frac{
1+|\xi|^2
-
\sqrt{(1+|\xi|^2)^2-4~\Re(\xi)^2}
}{
2\,\Re(\xi)
}
\right\}
\right\},\,
0
\right),
\end{equation}
and $\Re(\xi)$ denotes the real part of $\xi$ and $|\cdot|$ denotes the modulus of a complex number.
\end{lemma}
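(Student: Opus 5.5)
The plan is to reduce the minimisation of the Poincar\'e distance to a one-variable calculus problem in $t$. Write $\xi=x+iy$ with $x=\Re(\xi)$. In \eqref{Poincare_metric} with $w_1=\xi$ and $w_2=t\in[0,1)$, the map $\cosh^{-1}$ is strictly increasing and the factor $1-|\xi|^2>0$ does not depend on $t$. So minimising $d_{\mathbb{H}}(\xi,t)$ over $t\in[0,1)$ is equivalent to minimising
\[
f(t)=\frac{(x-t)^2+y^2}{1-t^2}.
\]
I would first study $f$ on the whole interval $(-1,1)$ and only then impose the constraint $t\ge 0$.

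Differentiating gives $f'(t)=g(t)/(1-t^2)^2$. Expanding the numerator, the cubic terms cancel, and one should obtain
\[
g(t)=-2\bigl(x t^2-(1+|\xi|^2)t+x\bigr).
\]
Hence the critical points solve $x t^2-(1+|\xi|^2)t+x=0$. For $x\neq 0$ the discriminant is $(1+|\xi|^2)^2-4x^2$. It is strictly positive because $1+|\xi|^2\ge 1+x^2>2|x|$ (strict since $|x|<1$). The two real roots have product $1$, so exactly one lies in $(-1,1)$, namely
\[
t^*=\frac{1+|\xi|^2-\sqrt{(1+|\xi|^2)^2-4x^2}}{2x},
\]
and $t^*$ has the same sign as $x$. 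The quadratic $-g/2$ has leading coefficient $x$, and between its two roots its sign is opposite to that of $x$. Checking signs on either side of $t^*$ then shows $f'<0$ on $(-1,t^*)$ and $f'>0$ on $(t^*,1)$. When $x=0$, $g(t)=2(1+|\xi|^2)t$ and the same conclusion holds with $t^*=0$. Thus $f$ is strictly unimodal on $(-1,1)$ with unique minimiser $t^*$; this also recovers Lemma~\ref{uniquesness lemma} for $\mu_0=0$.

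The constraint is then handled by cases:
\begin{itemize}
\item If $x>0$, then $t^*\in(0,1)$ is the minimiser on $[0,1)$.
\item If $x\le 0$, then $t^*\le 0$, so $f$ is increasing on $[0,1)$ and the minimiser is $t=0$.
\end{itemize}
Both cases are captured by $\max\{0,t^*\}$. The outer $\min\{1,\cdot\}$ is inactive since $t^*<1$. The case $x=0$ must be read either as the removable limit of the displayed fraction, which is $0$ by rationalising the numerator, or directly through the $\max$.

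I expect the main obstacle to be this bookkeeping rather than the algebra. One must confirm that the ``minus'' root is the one inside the disk for both signs of $x$, and justify the sign pattern of $f'$ when $x<0$, where the parabola opens downward. A clean way is to rationalise:
\[
t^*=\frac{2x}{1+|\xi|^2+\sqrt{(1+|\xi|^2)^2-4x^2}}.
\]
This form makes $|t^*|<1$, the sign of $t^*$, and the $x=0$ case all immediate.
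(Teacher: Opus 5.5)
Your proposal is correct and follows the same route as the paper. Both differentiate a monotone transform of the distance in $t$, reach the quadratic $\Re(\xi)t^2-(1+|\xi|^2)t+\Re(\xi)=0$, take the minus-sign root, and clamp it to $[0,1)$; your sign analysis of $f'$, product-of-roots selection of $t^*$, justification of $\max\{0,\cdot\}$ when $\Re(\xi)\le 0$, and handling of the $0/0$ case $\Re(\xi)=0$ via the rationalised form supply details the paper's proof leaves implicit.
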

\begin{proof}See the Appendix \ref{pf:lm: Projection zero}. \end{proof}

These estimates are then used to construct the proposed test statistic
\begin{equation}
T= |d_{R_0}(\hat{\xi}_1)- d_{R_0}(\hat{\xi}_2)|,
\label{Poincare_test}
\end{equation}
where $\hat{\xi}_g$ is obtained by substituting the maximum likelihood
estimators $(\hat{\mu}_g,\hat{\kappa}_g)$ into the embedding defined in
Equation~\eqref{Poincare_disk}. The null hypotheis is rejected for the large value of the statistics. Notably, the statistic $T$ remains well defined
even when $\kappa_1\neq\kappa_2$. Since a closed-form distribution of $T$ is not available, statistical
inference is carried out using a permutation-based procedure, which ensures
accurate empirical size and improved power across a wide range of configurations.
We summarize  the theoretical consistency of the proposed test in the following theorem.

\begin{theorem}[Consistency of the Poincar\'{e} Distance-Based Permutation Test]
\label{thm:consistency}
Let $\{\Theta_{gi}\}_{i=1}^{n_g}$, $g = 1, 2$, be independent samples such that
\begin{equation*}
\Theta_{gi} \stackrel{\text{iid}}{\sim} \text{VM}(\mu_g, \kappa_g), \quad \kappa_g > 0,
\end{equation*}
and let $\xi_g \in \mathbb{D}$ denote the population-level embedding defined in equation (1.1). Consider the hypotheses
\begin{equation*}
H_0 \!: d_{R_0}(\xi_1) = d_{R_0}(\xi_2) \quad \text{versus} \quad H_1 \!: d_{R_0}(\xi_1) \neq d_{R_0}(\xi_2),
\end{equation*}
and the test statistic
\begin{equation*}
T = \bigl|d_{R_0}(\hat{\xi}_1) - d_{R_0}(\hat{\xi}_2)\bigr|,
\end{equation*}
where $\hat{\xi}_g$ is obtained by substituting the maximum likelihood estimators $(\hat{\mu}_g, \hat{\kappa}_g)$ for $(\mu_g, \kappa_g)$.

Then the permutation test that rejects $H_0$ for large values of $T$ is \emph{consistent} in the sense that, for any fixed alternative satisfying $d_{R_0}(\xi_1) \neq d_{R_0}(\xi_2)$,
\begin{equation*}
\lim_{\min\{n_1, n_2\} \to \infty} \mathbb{P}_{H_1} \bigl(\text{reject } H_0\bigr) = 1.
\end{equation*}
\end{theorem}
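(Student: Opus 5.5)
The plan is the usual two-part argument for permutation tests. First, the observed statistic converges in probability to a strictly positive constant under $H_1$. Second, the permutation critical value converges in probability to $0$, or at least stays bounded away from that constant.

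For the first part I will use consistency of the MLEs. By the strong law of large numbers, $(\bar C_g,\bar S_g)\to (A_1(\kappa_g)\cos\mu_g, A_1(\kappa_g)\sin\mu_g)$ almost surely. Because $\kappa_g>0$, the limit $\bar R_g\to A_1(\kappa_g)\in(0,1)$ lies in the interior of the domain of $A_1^{-1}$. Since $A_1$ is continuous and strictly increasing, $A_1^{-1}$ is continuous there, so $\hat\kappa_g\to\kappa_g$. The map $e^{i\hat\mu_g}=(\bar C_g+i\bar S_g)/\bar R_g$ is continuous away from $\bar R_g=0$, which gives $\hat\xi_g\to\xi_g$ in $\mathbb D$.

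Next, the function $\xi\mapsto d_{R_0}(\xi)$ is continuous on $\mathbb D$. The infimum of the distance to a fixed set is $1$-Lipschitz in $d_{\mathbb H}$, and $d_{\mathbb H}$ is continuous with respect to the Euclidean topology on $\mathbb D$. Lemma~\ref{uniquesness lemma} and Lemma~\ref{lm: Projection zero} give an explicit continuous projection as an alternative route. The continuous mapping theorem then yields $T\to T_\infty:=|d_{R_0}(\xi_1)-d_{R_0}(\xi_2)|>0$ in probability.

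For the permutation distribution, let $\pi$ be a uniform random relabelling of the pooled sample of size $N=n_1+n_2$. I will assume $n_1/N\to\lambda\in(0,1)$, which will need to be stated as a mild condition. The two permuted subsamples are then, conditionally on the data, simple random samples without replacement from the pooled sample. Their trigonometric means $\bar C^\pi_g,\bar S^\pi_g$ concentrate around the pooled means $(\bar C,\bar S)$. By Chebyshev's inequality for sampling without replacement, the conditional variance is at most $1/n_g$, since $|\cos|,|\sin|\le 1$.

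The pooled means converge almost surely to the mixture moment $m=\lambda m_1+(1-\lambda)m_2$, where $m_g=A_1(\kappa_g)e^{i\mu_g}$. Both permuted groups therefore have $\hat\xi^\pi_g\to\xi^*$, the embedding of the mixture moment, and $T^\pi\to 0$ in conditional probability. This requires $|m|>0$ so that $A_1^{-1}$ and $\operatorname{Arg}$ are continuous at $m$.

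\textbf{The main obstacle} is the degenerate case $m=0$. I plan to handle it by a uniform-continuity argument, and this is what I would try first. The composite map $(\bar C,\bar S)\mapsto d_{R_0}(\xi)$ extends continuously to $0$ with value $d_{R_0}(0)=0$, because $r(\hat\kappa)\to 0$ as $\bar R\to0$ regardless of the angle. So $T^\pi\to0$ in conditional probability holds in every case.

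Finally, the $(1-\alpha)$ conditional quantile $c_{N,\alpha}$ of $T^\pi$ satisfies $c_{N,\alpha}\to0$ in probability. Hence
\[
\mathbb P_{H_1}(T>c_{N,\alpha})\ \ge\ \mathbb P(T>T_\infty/2)-\mathbb P(c_{N,\alpha}\ge T_\infty/2)\to 1,
\]
which proves consistency.
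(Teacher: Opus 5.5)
Your argument is correct. Its first half (law of large numbers for $(\bar C_g,\bar S_g)$, continuity of $A_1^{-1}$, continuous mapping to $T\to\Delta:=|d_{R_0}(\xi_1)-d_{R_0}(\xi_2)|>0$) is the paper's. You genuinely differ from the paper in how you control the permutation critical value.

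The paper appeals to invariance of the pooled sample under $H_0$. It then asserts that $c^{\text{perm}}_\alpha$ is $O_P(1)$ ``and hence'' eventually below $\Delta$. That step does not follow, because boundedness in probability says nothing about lying below a fixed positive constant. Exchangeability is also unavailable, since the composite null $d_{R_0}(\xi_1)=d_{R_0}(\xi_2)$ does not force the two von Mises laws to coincide (e.g.\ $\mu_2=-\mu_1$ with $\kappa_1=\kappa_2$). It is beside the point in any case, because consistency concerns data generated under $H_1$.

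You instead analyse the permutation distribution under the true law. Both relabelled groups are simple random samples without replacement from the same pool, so their trigonometric means concentrate at a common point. Hence $T^\pi\to0$ in conditional probability and $c_{N,\alpha}=o_P(1)$, which is exactly what your final inequality requires. The paper's route is shorter and cites a general permutation-test result, but as written it never shows the critical value falls below $\Delta$; yours does.

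You also gain on two smaller points. The $1$-Lipschitz property of $\xi\mapsto d_{R_0}(\xi)$ gives continuity without relying on the uniqueness lemma. Working with $e^{i\hat\mu_g}=(\bar C_g+i\bar S_g)/\bar R_g$ avoids the wrap-around of $\hat\mu_g\in[0,2\pi)$ when $\mu_g=0$.

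Finally, you need not add $n_1/N\to\lambda\in(0,1)$ as a hypothesis. Pass to subsequences along which $n_1/N\to\lambda\in[0,1]$. Your concentration step uses only $n_1,n_2\to\infty$, and the limit $m$ satisfies $|m|\le\max_g A_1(\kappa_g)<1$. The case $m=0$ is covered by your continuity extension, so the theorem holds exactly as stated.
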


\begin{proof}
The proof proceeds by establishing convergence of the test statistic to a
non-zero constant under $H_1$ and exploiting the invariance of the permutation
distribution under $H_0$. See the  Appendix \ref{pf:thm:consistency } for details. 
\end{proof}

\section{Simulation}
\label{simulation}
The primary objective of the simulation study is to compare the proposed hyperbolic geometry–based two-sample test with existing trigonometric methods in circular statistics. Classical procedures rely on Euclidean embeddings of angular data through trigonometric functions and normal approximation,  whereas the proposed approach models the von Mises parameter space directly within a hyperbolic geometric framework. This distinction is particularly relevant in settings involving heterogeneous concentration parameters or combined location-concentration alternatives.
The simulation design closely follows and extends the settings considered in \cite{biswas2015response,biswas2016comparison}, thereby ensuring direct comparability with existing results while highlighting the comparisons of the proposed methodology.

% \subsection{Data-generating mechanisms}
% \begin{verbatim}
%     ss=c(20,50,100,200)
% kp1 = 1.5#c(0.5, 1.3),
% kp2 =  3 #c(0.5, 1.3),
% mu1=0*pi/4
% ll=21
% effect_angles1 = seq(mu1,pi, length =ll)

%   nsim_per1 =1000
%   B1 = 2500
% \end{verbatim}

% For each simulation scenario, independent samples are generated from von Mises distributions:
% \[
% \theta_{gi} \sim \mathrm{VM}(\mu_g, \kappa_g), \quad g=1,2,
% \]
% where $\mu_g$ denotes the mean direction and $\kappa_g$ the concentration parameter.

% The following hypotheses are examined:
% \begin{itemize}
%     \item \textbf{Null hypothesis equal kappa:} $\mu_1=\mu_2\in [0,\pi)$ by $\pi/10$ and $\kappa_1=\kappa_2\in\{0.5, 1.5,3.0\}$ $n=20,50,100,200.$ Fig 6

%      \item \textbf{Null hypothesis unequal kappa:} $\mu_1=\mu_2\in [0,\pi)$ by $\pi/10$ and $(\kappa_1,\kappa_2)\in\{(1,1.5), (1.5,3),(3,1.5)\}$ $n=20,50,100,200.$ Fig 6
%      \item \textbf{Location alternatives equal kappa:} $0=\mu_1 \neq \mu_2\in [0,\pi)$ by $\pi/10$ $\kappa_1=\kappa_2\in\{1.0, 1.5,3.0\}$ , $n=20,50,100,200.$Fig 4

%       \item \textbf{Location alternatives  unequal kappa:}  $0=\mu_1 \neq \mu_2\in [0,\pi)$ by $\pi/10$ and $(\kappa_1,\kappa_2)\in\{(1,1.5), (1.5,3),(3,1.5)\}$  as in Fig 7 
    
%     \item \textbf{Concentration alternatives:} $\mu_1=\mu_2 \in \{0, \pi/4, \pi/3\}$, $\kappa_1\in\{1.0, 1.5,3.0\} \neq \kappa_2\in [0.25,6]$  [NEW additional numerical studies] 
  
% \end{itemize}

% These configurations mirror the alternatives considered in Biswas et al., while allowing assessment of scenarios in which trigonometric methods are known to exhibit reduced power.

\subsection{Equal concentration$(\kappa_1=\kappa_2)$}
\label{equal_kappa}
Under the null hypothesis, the proposed method exhibits accurate control of the empirical size across a wide range of sample sizes,
\( n = 20, 50, 100, 200 \), and concentration levels
\( \kappa_1 = \kappa_2 \in \{1, 1.5, 3.0\} \); see Figure~\ref{size_eql_kappa}.
The empirical sizes are computed using 2500 permutations within each of 1000 Monte Carlo iterations, with mean directions
\( \mu_1 = \mu_2 \in [0, 2\pi) \) evaluated on an equally spaced grid of width \( \pi/10 \).
Overall, the proposed test demonstrates a high degree of stability in size across all considered settings and shows improved robustness compared to the existing method of \cite{biswas2016comparison}.

\begin{figure}[h!]

\centering
\subfloat[]{%
{\includegraphics[width=0.34\textwidth, height=0.25\textwidth]{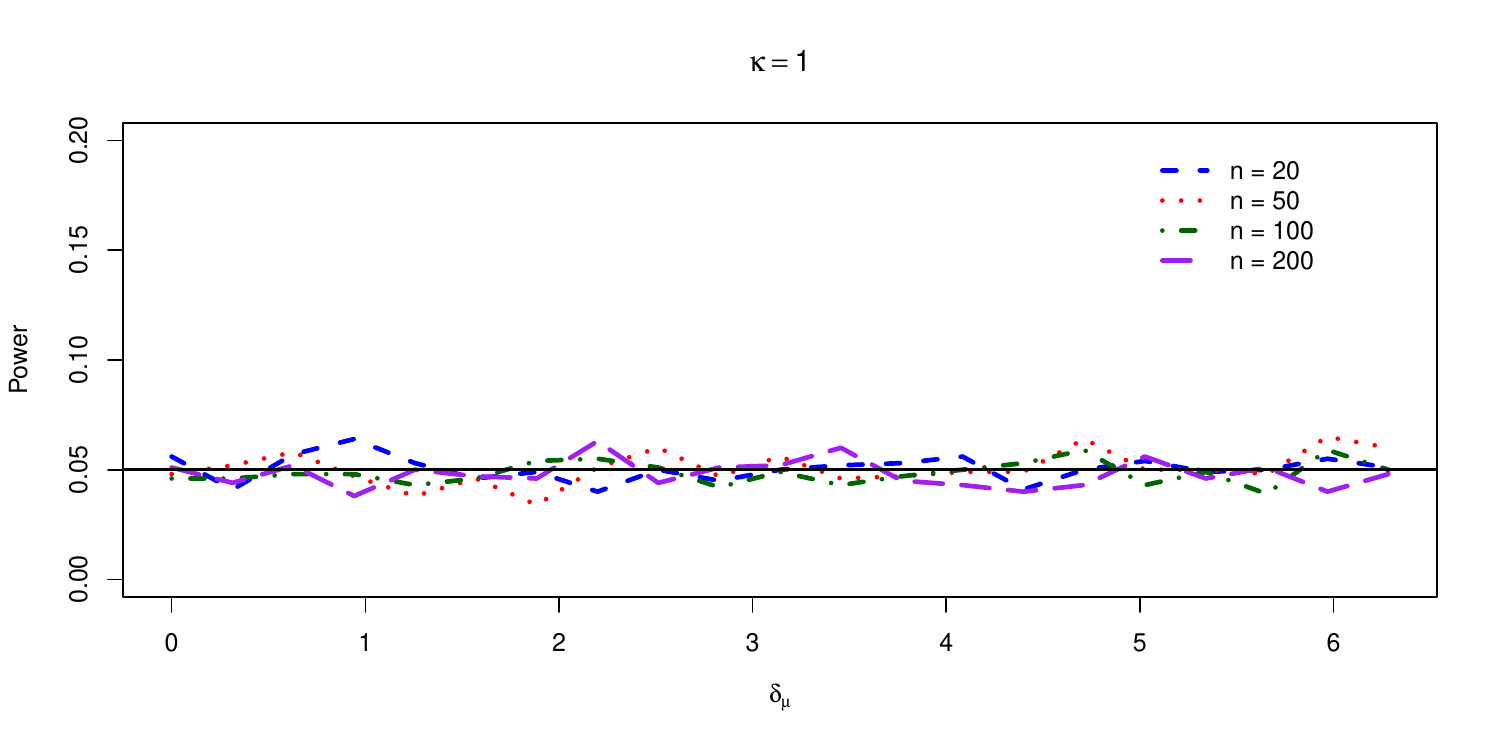}}}
\subfloat[]{%
{\includegraphics[width=0.34\textwidth, height=0.25\textwidth]{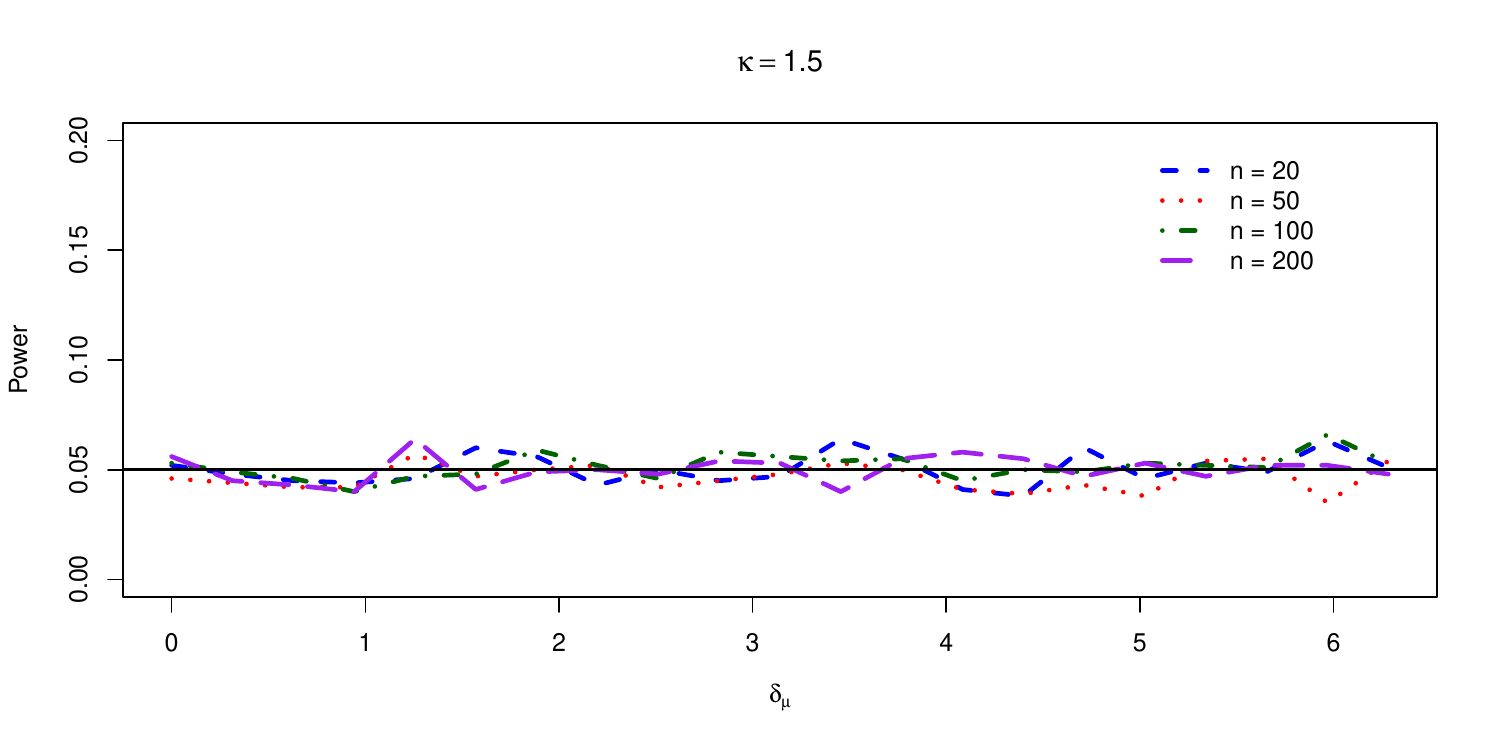}}}
\subfloat[]{%
{\includegraphics[width=0.34\textwidth, height=0.25\textwidth]{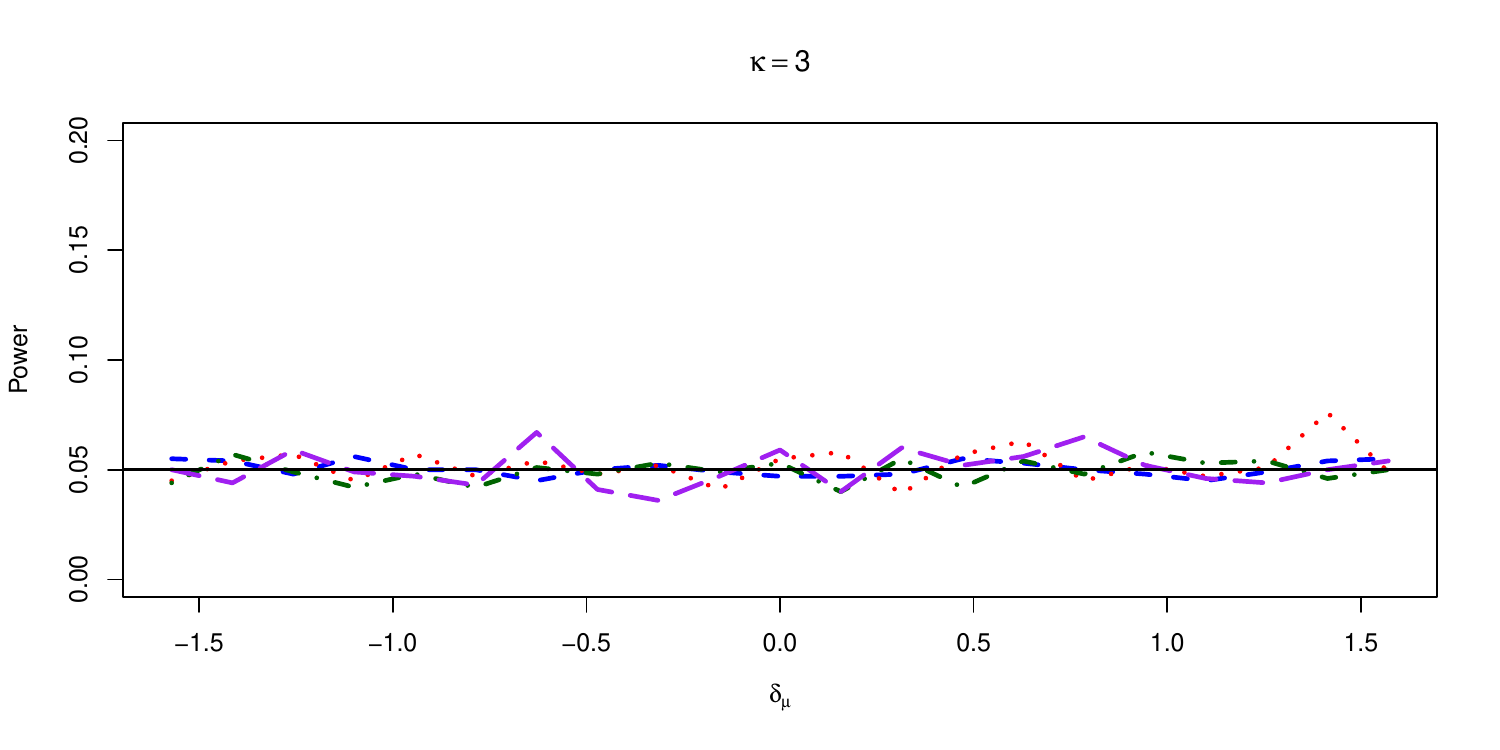}}}
\caption{ Empirical size at $5\%$ level of the proposed permutation-based test under the null hypothesis for sample sizes n = 20, 50, 100, and 200 and equal concentration parameters $\kappa_1 = \kappa_2 \in \{1, 1.5, 3.0\}$, demonstrating accurate size control across settings.
}
\label{size_eql_kappa}
\end{figure}

Power curves are subsequently evaluated under the alternative hypothesis
\( \mu_2 \in [0, 2\pi) \) against the null \( \mu_1 = 0 \),
for the same sample sizes \( n = 20, 50, 100, 200 \) and concentration parameters
\( \kappa_1 = \kappa_2 \in \{1, 1.5, 3.0\} \), and are presented in
Figure~\ref{power_eql_kappa}[(a),(b),(c)].
For moderate to large sample sizes, the proposed test consistently attains higher empirical power, indicating its favorable asymptotic behavior.
For small sample sizes (\( n = 20 \)), the test proposed by \cite{biswas2016comparison} exhibits a marginal advantage, while for moderate sample sizes (\( n = 50 \)) the two procedures perform comparably.
These differences are further illustrated through the empirical power differences shown in
Figure~\ref{power_eql_kappa}[(d),(e),(f)].

\begin{figure}[h!]
\centering
\subfloat[]{%
{\includegraphics[width=0.34\textwidth, height=0.25\textwidth]{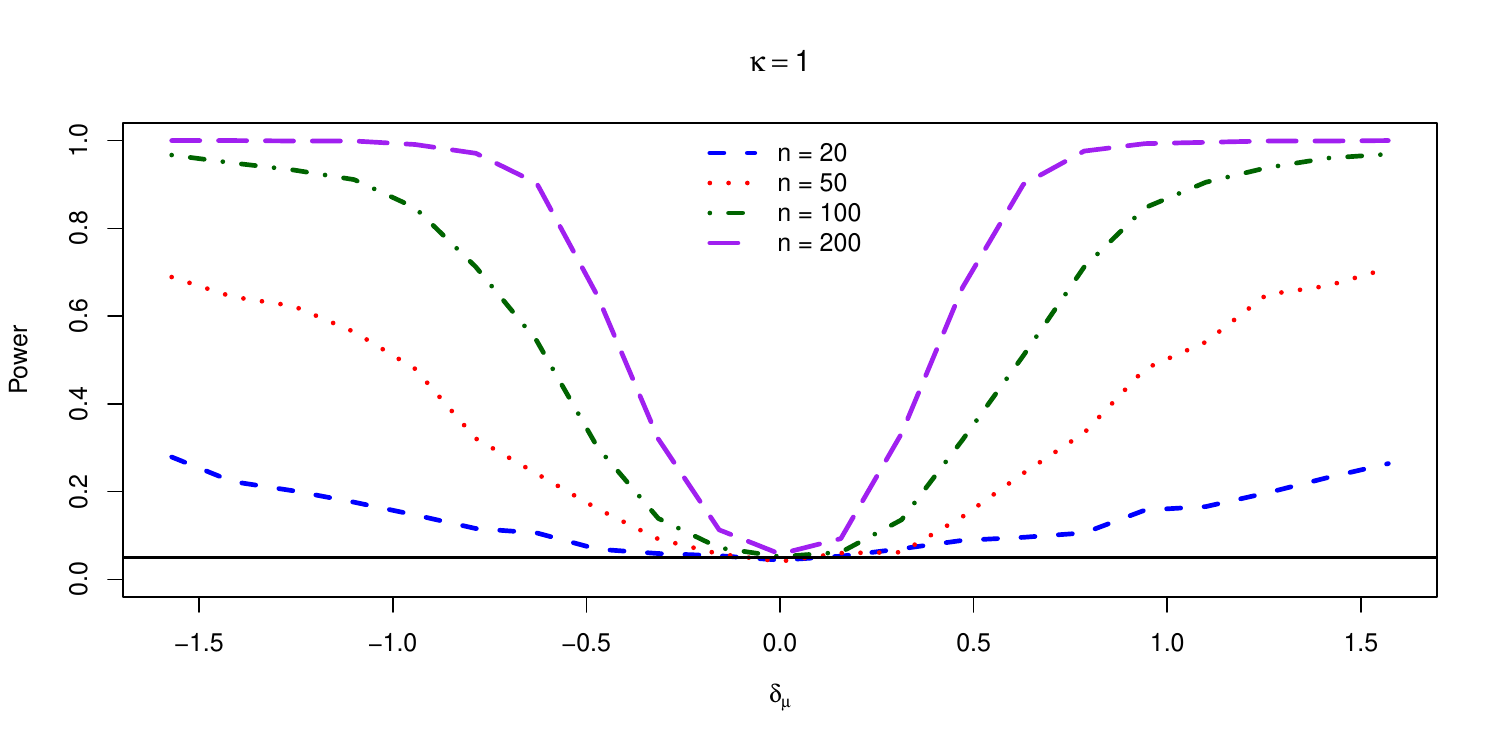}}}
\subfloat[]{%
{\includegraphics[width=0.34\textwidth, height=0.25\textwidth]{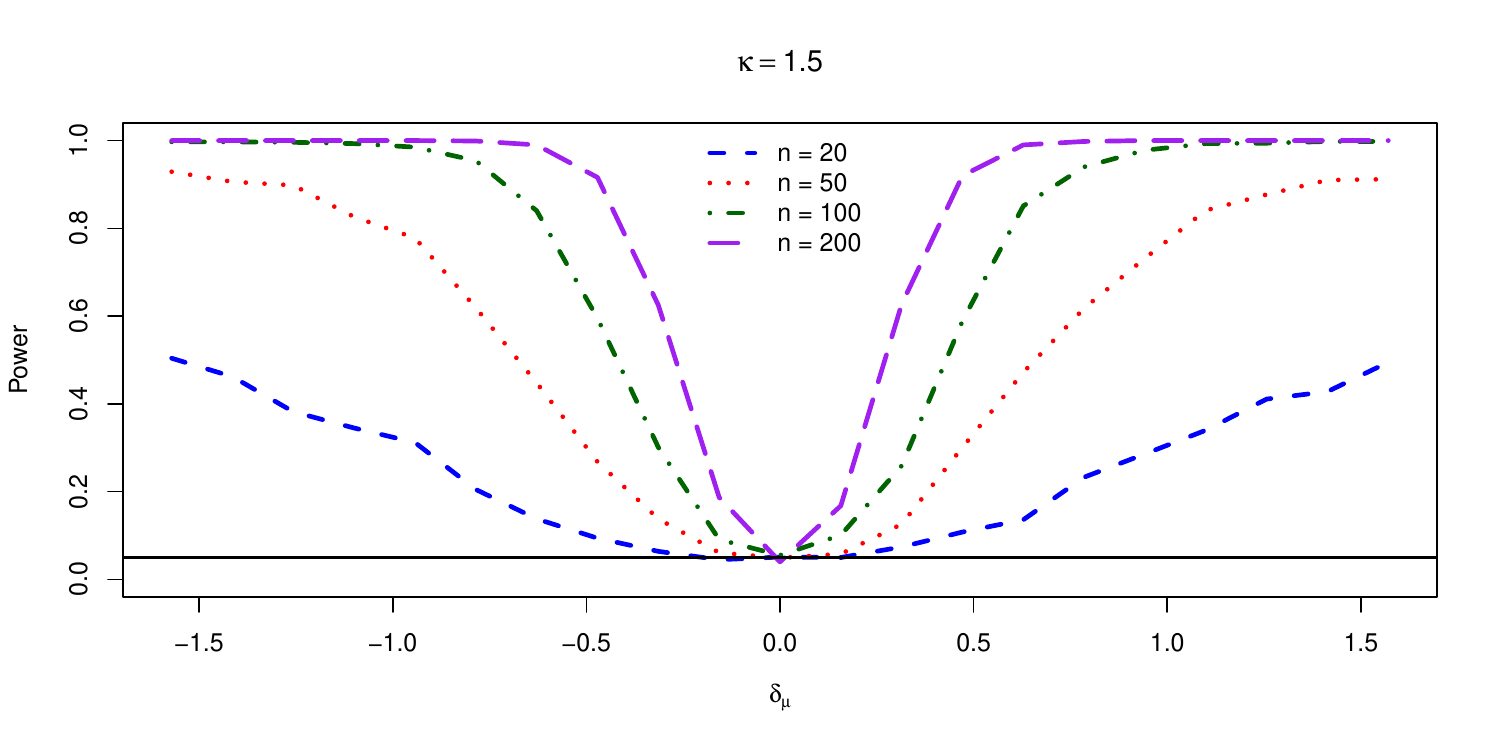}}}
\subfloat[]{%
{\includegraphics[width=0.34\textwidth, height=0.25\textwidth]{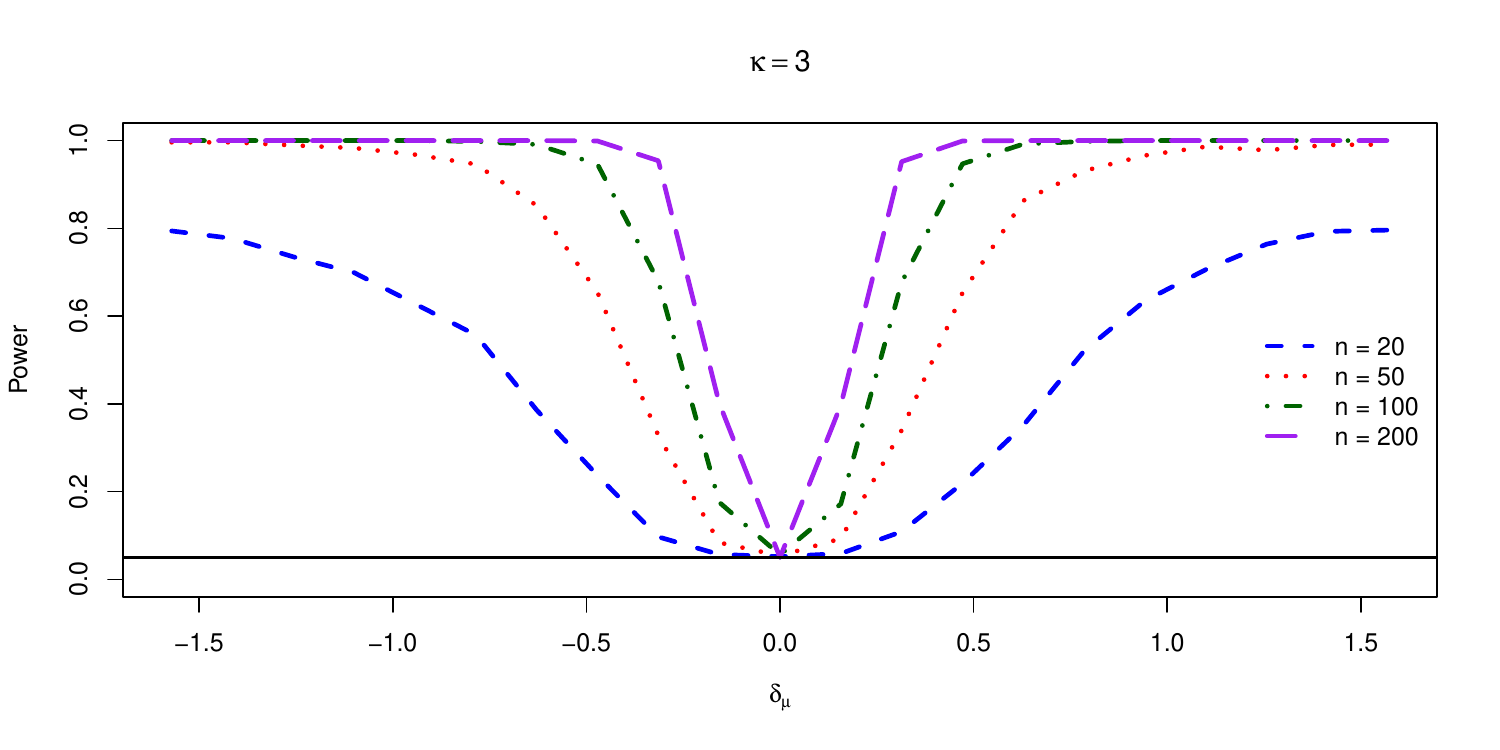}}}\\

\subfloat[]{%
{\includegraphics[width=0.34\textwidth, height=0.25\textwidth]{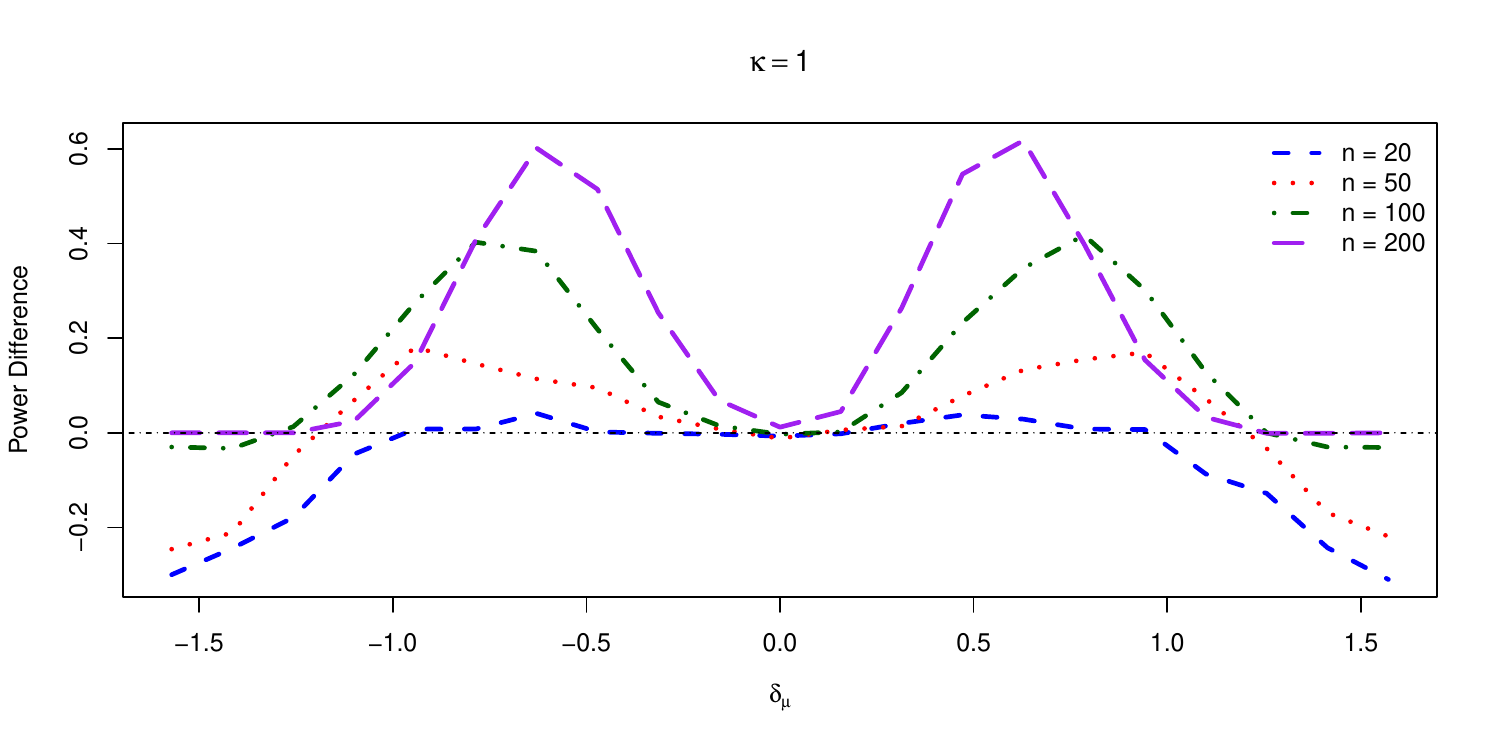}}}
\subfloat[]{%
{\includegraphics[width=0.34\textwidth, height=0.25\textwidth]{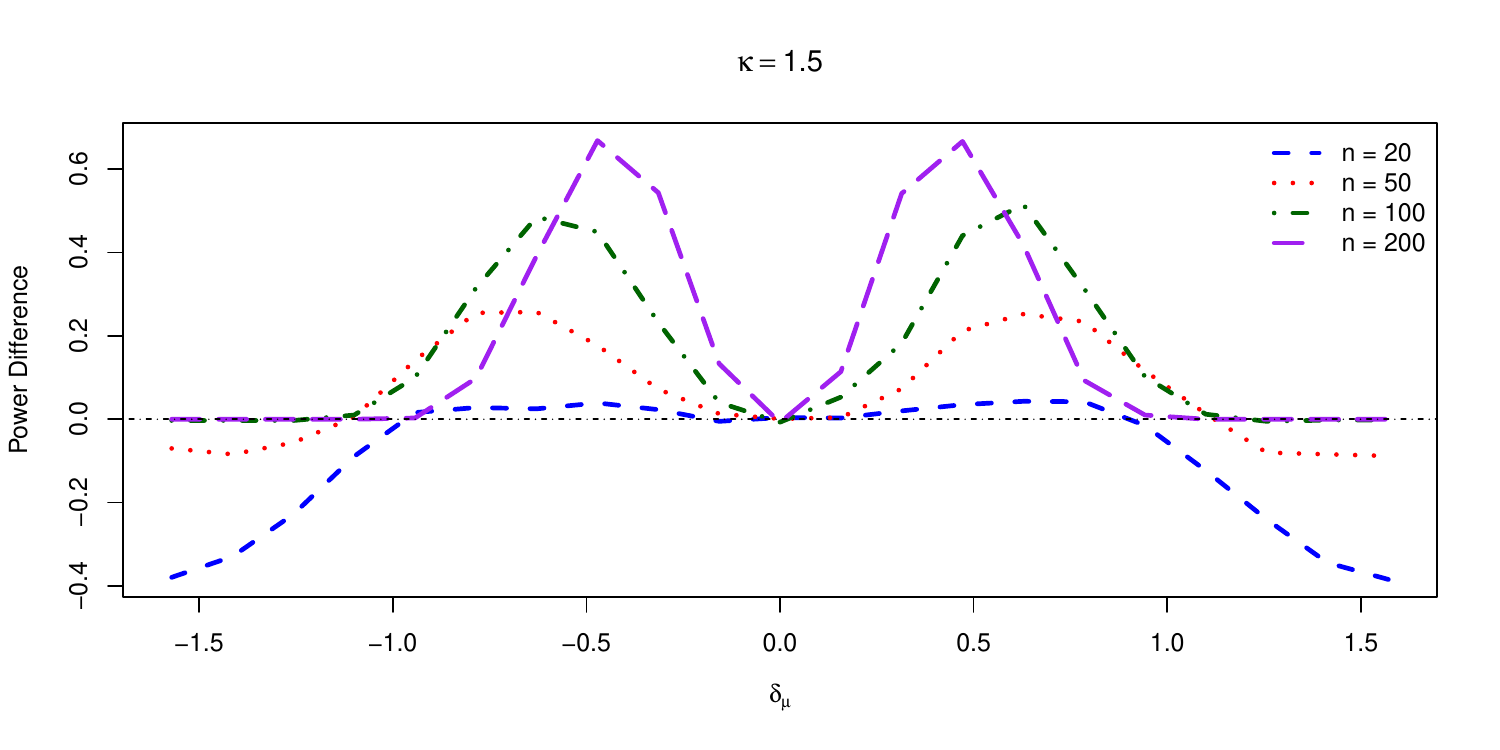}}}
\subfloat[]{%
{\includegraphics[width=0.34\textwidth, height=0.25\textwidth]{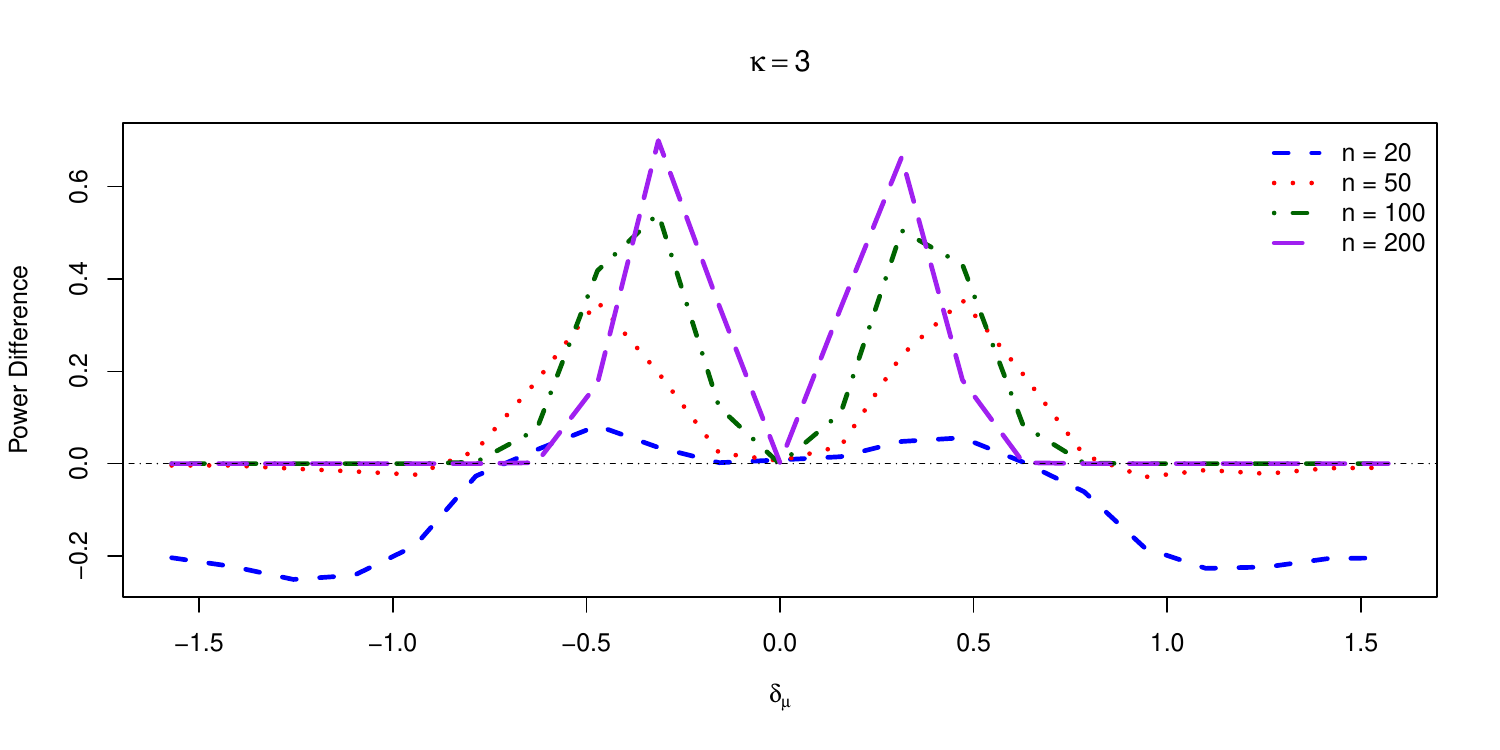}}}
\caption{ Empirical power curves under the alternative $\mu_2 \in [0, 2\pi)$ with $\mu_1 = 0$ for sample sizes $n = 20, 50, 100,$ and 200 and equal concentration parameters $\kappa_1 = \kappa_2 \in\{1, 1.5, 3.0\}$. Panels (a)–(c) show the power curves, while panels (d)–(f) display the corresponding empirical power differences with the Z-test of \cite{Biswas_2016}.
}
\label{power_eql_kappa}
\end{figure}

\subsection{Unequal concentration$(\kappa_1\neq\kappa_2)$}
\label{unequal_kappa}
As the parameter space (see Equation \ref{Poincare_disk}) and the test statistic (see Equation \ref{Poincare_test}) have been defined in the previous section, equal angular separation of the mean form the zero direction still represents non-null situation  even when concentration parameters are unequal. As a consequence the empirical size of the  above defined permutation test does not match with that of the desired size. Hence we propose to approximate the null distribution with parametric bootstrap and perform the test as required. 

For each sample, the parameters $(\mu_g,\kappa_g)$ are estimated by maximum likelihood, yielding $(\hat{\mu}_g,\hat{\kappa}_g)$.
The observed test statistic is defined as
$T_{\mathrm{obs}} = |d_{R_0}(\hat\xi_1) -d_{R_0}(\hat\xi_2)|$ from the Equation \ref{Poincare_test}. 

To approximate the sampling distribution of $T_{\mathrm{obs}}$, a parametric bootstrap procedure is employed.
For $b=1,\ldots,B$, independent bootstrap samples are generated according to
\[
\theta_{1,i}^{*(b)} \sim \mathrm{vM}(0,\hat{\kappa}_1), \qquad
\theta_{2,i}^{*(b)} \sim \mathrm{vM}(0,\hat{\kappa}_2),
\]
for $i=1,\ldots,n_1$ and $i=1,\ldots,n_2$, respectively.
For each bootstrap sample, the von Mises parameters are re-estimated, yielding
$(\hat{\mu}_g^{*(b)},\hat{\kappa}_g^{*(b)})$, and the bootstrap test statistic
$
T^{*(b)} = |d_{R_0}(\hat\xi_1^{*(b)})- d_{R_0}(\hat\xi_2^{*(b)})| 
$ 
is computed, where $\hat\xi_g^{*(b)}=\xi(\hat{\mu}_g^{*(b)},\hat{\kappa}_g^{*(b)})$ for $g=1,2.$

The two-sided $p$-value is obtained as
\[
p =
\frac{1 + \sum_{b=1}^B \mathbb{I}\!\left(
\lvert T^{*(b)} \rvert \ge \lvert T_{\mathrm{obs}} \rvert
\right)}{B+1}.
\]

\begin{figure}[h!]
\centering
\subfloat[]{%
{\includegraphics[width=0.34\textwidth, height=0.25\textwidth]{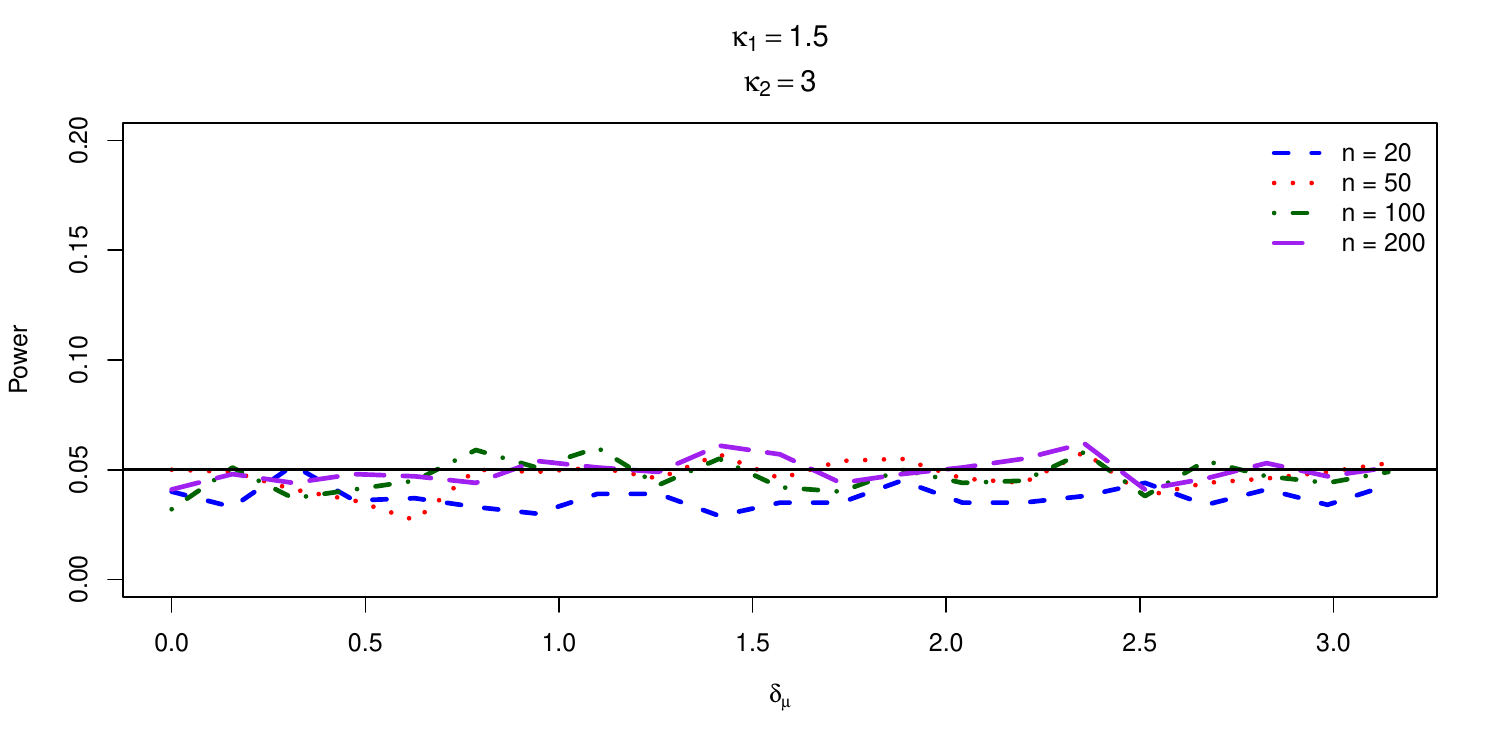}}}
\subfloat[]{%
{\includegraphics[width=0.34\textwidth, height=0.25\textwidth]{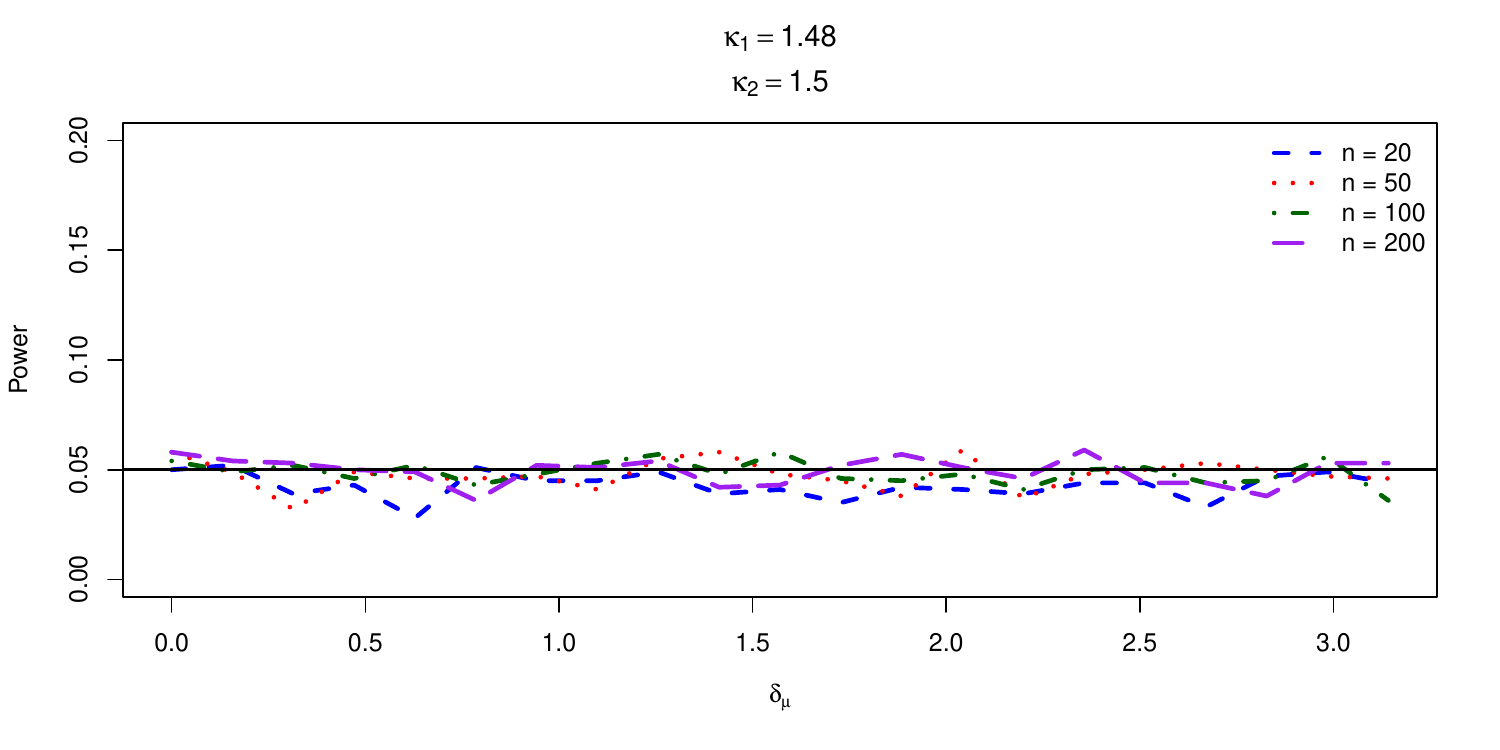}}}
\caption{Empirical size under the null hypothesis $\mu_1 = \mu_2 = \mu_2 \in [0,\pi)$ for sample sizes n = 20, 50, 100, and 200, assuming unequal concentration parameters. Panels (a) and (b) correspond to $(\kappa_1,\kappa_2) =$ (1.5, 3.0) and (1.48, 1.50), respectively.
}
\label{null_uneql_kappa}
\end{figure}

The proposed test leverages the intrinsic geometry of the Poincar\'e disk to encode concentration differences as radial depths, yielding a statistic with a clear geometric interpretation. Owing to its dependence solely on the radial component of the embedding, the test is rotation invariant, which justifies centering the bootstrap samples at zero without loss of generality. By re-estimating the von Mises parameters within each bootstrap replicate, the procedure automatically incorporates estimation uncertainty and avoids reliance on potentially inaccurate asymptotic approximations. The parametric bootstrap further provides a practical solution to the nonlinearity induced by the Poincar\'e mapping, for which analytical sampling distributions are intractable. Unlike permutation-based methods, the proposed approach does not rely on exchangeability of the pooled samples, but instead constructs the null distribution through model-based resampling from the fitted von Mises distributions.

\begin{figure}[h!]
\centering
\subfloat[]{%
{\includegraphics[width=0.34\textwidth, height=0.25\textwidth]{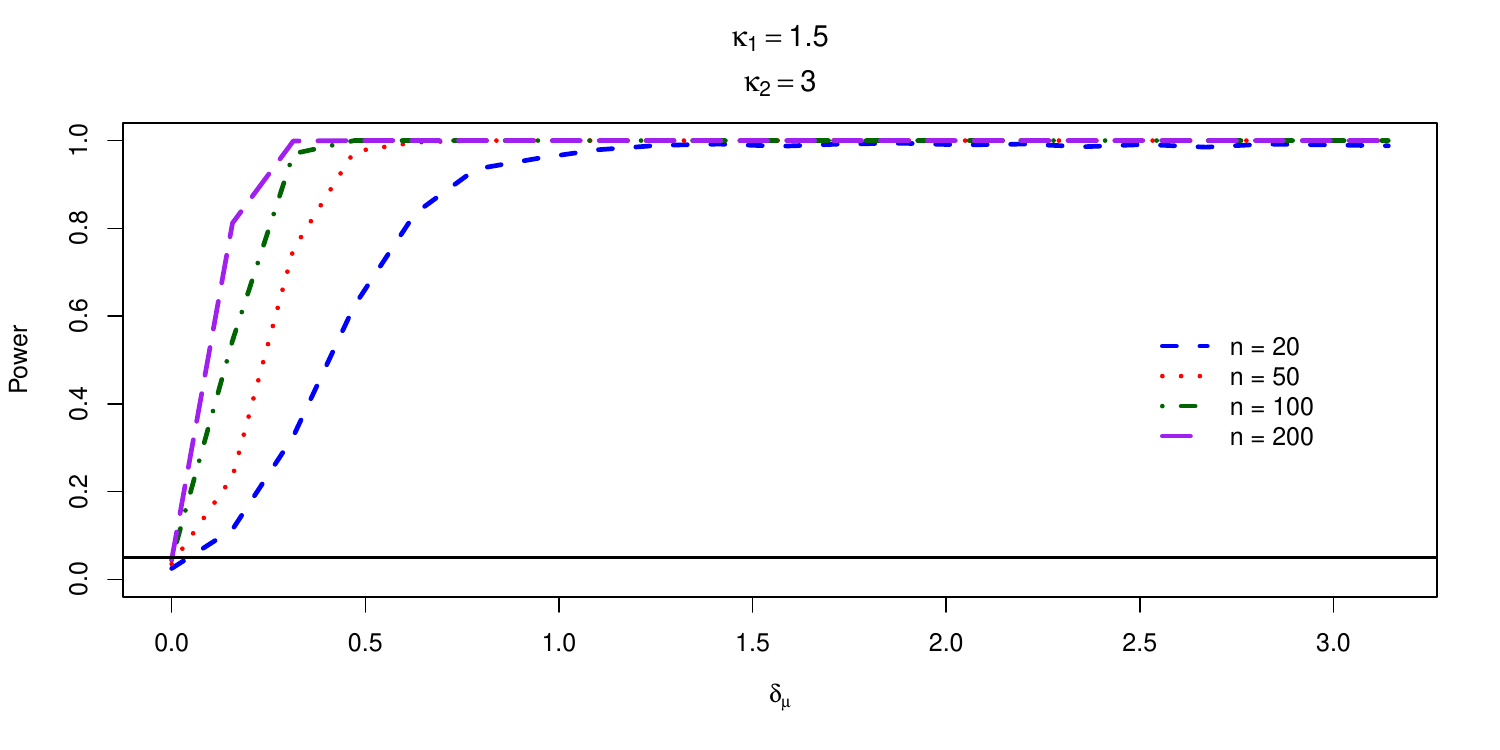}}}
\subfloat[]{%
{\includegraphics[width=0.34\textwidth, height=0.25\textwidth]{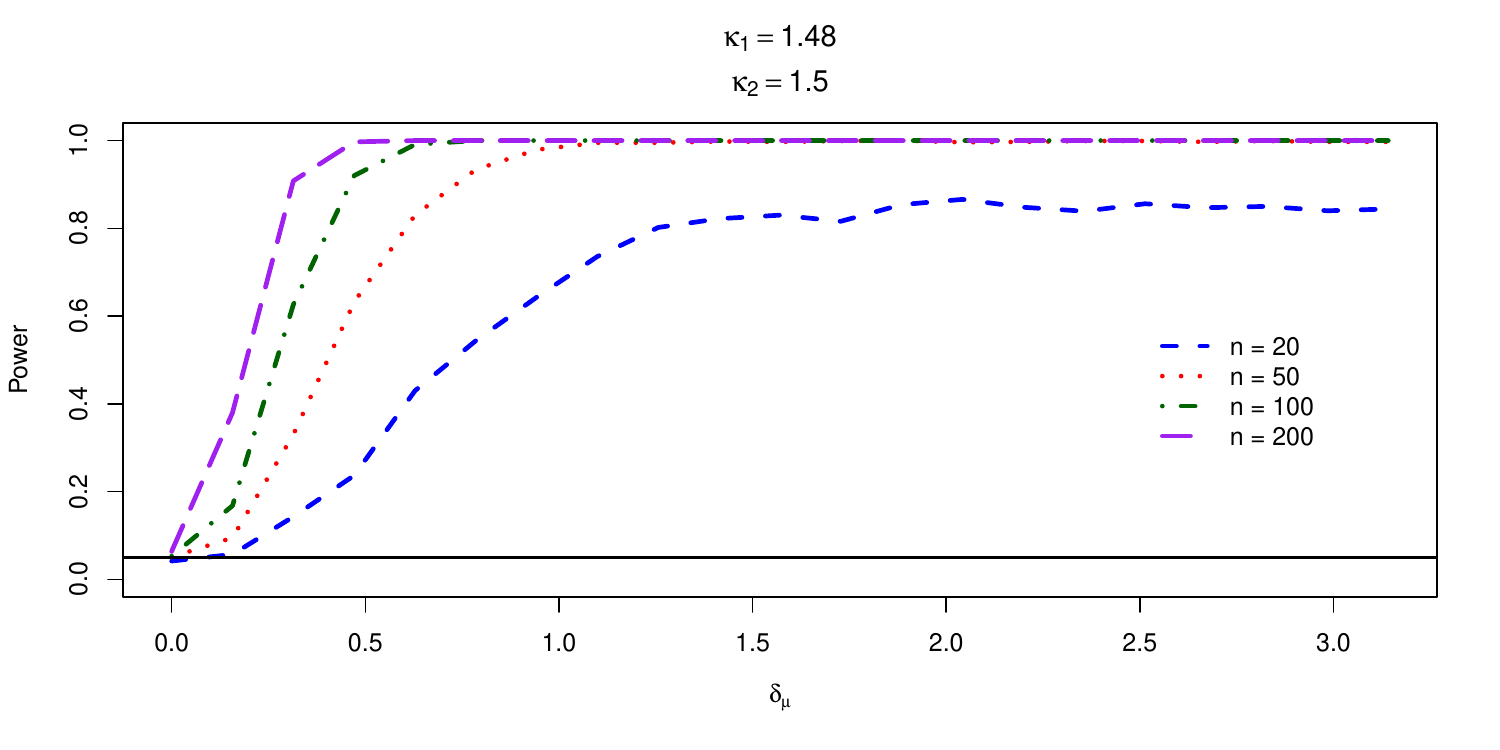}}}\\

\subfloat[]{%
{\includegraphics[width=0.34\textwidth, height=0.25\textwidth]{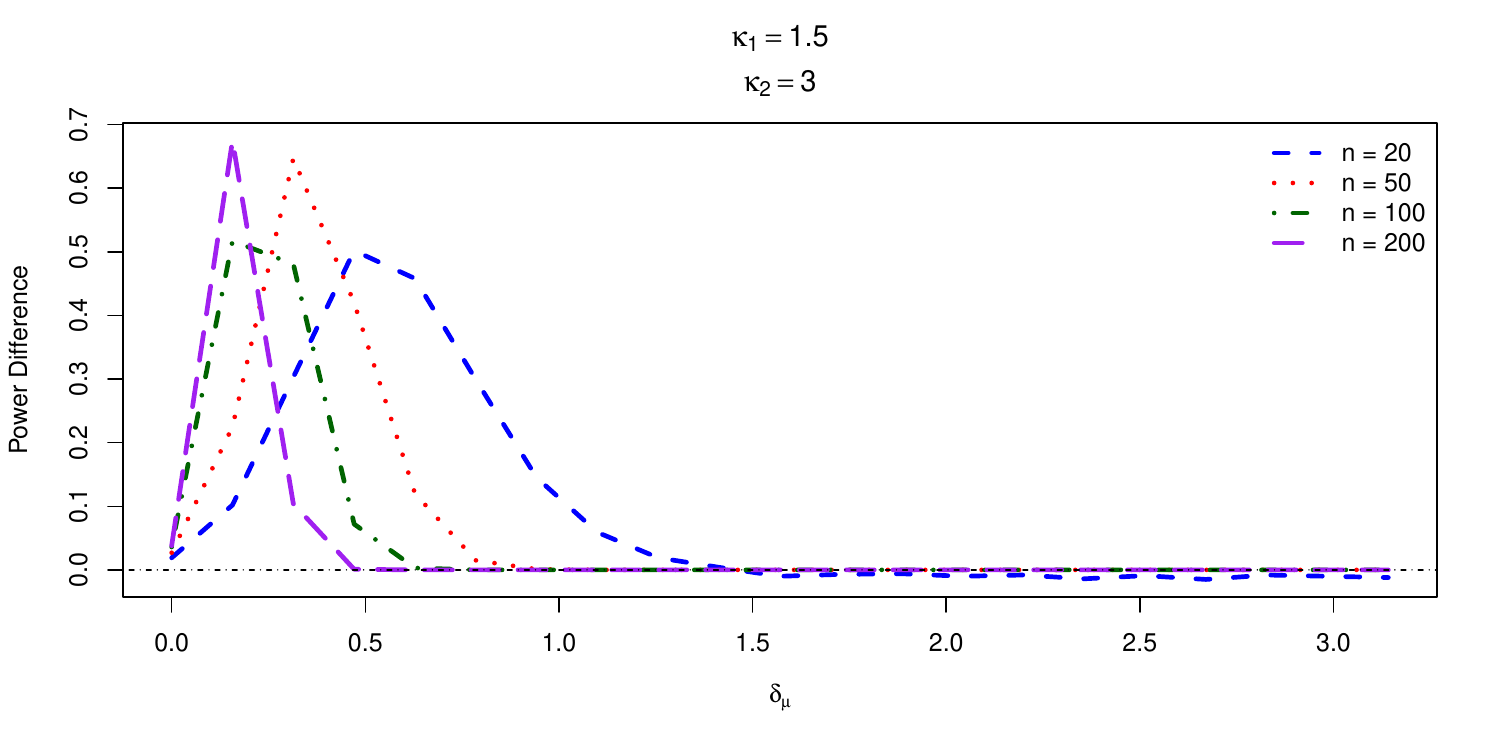}}}
\subfloat[]{%
{\includegraphics[width=0.34\textwidth, height=0.25\textwidth]{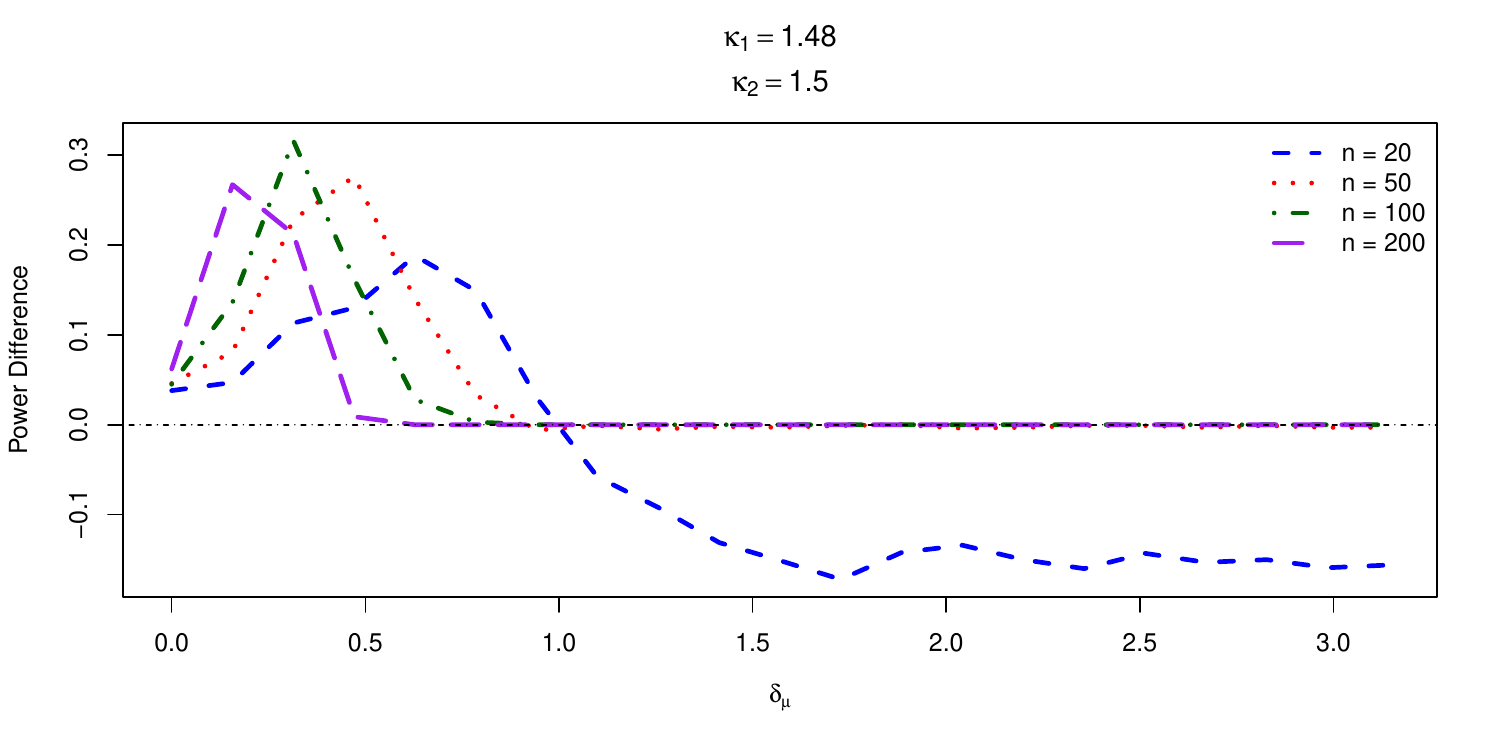}}}
\caption{Empirical power curves under the alternative $\mu_2 \in [0,\pi)$ with $\mu_1 = 0$ for sample sizes n = 20, 50, 100, 200, assuming unequal concentration parameters. Panels (a) and (b) correspond to ($(\kappa_1,\kappa_2) = (1.5, 3.0)$) and (1.48, 1.50), respectively, while panels (c) and (d) show the corresponding empirical power differences with the existing W-test by \cite{biswas2016comparison}, respectively.
}
\label{power_uneql_kappa}
\end{figure}
The finite-sample size performance of the proposed test was evaluated through an extensive Monte Carlo simulation study over a broad range of null parameter configurations. For each configuration, $1000$ Monte Carlo iterations were conducted, and within each iteration the null distribution of the test statistic was approximated using $2500$ parametric bootstrap replicates. The null parameter space included mean directions spanning $\mu \in [0,\pi)$ and concentration parameters covering the range of practical interest encountered in directional data applications. Across all settings considered, the empirical type~I error rates were stable and closely matched the nominal significance level, indicating that the bootstrap calibration provides reliable size control in finite samples (see Figure-\ref{null_uneql_kappa}). These results demonstrate that the proposed procedure is robust with respect to both angular location and concentration under the null hypothesis, supporting its suitability for routine applied use.

Power curves are evaluated under the alternative hypothesis
$\mu_2 \in [0,\pi)$ against the null hypothesis $\mu_1 = 0$,
for common sample sizes $n = 20, 50, 100, 200$ and unequal concentration parameters
$\kappa_1 = 1.5 \neq \kappa_2 = 3.0$.
The resulting power curves are displayed in
Figure~\ref{power_uneql_kappa}(a).
For moderate to large sample sizes, the proposed test consistently achieves higher empirical power, reflecting its favorable asymptotic behavior.
These differences are further highlighted by the empirical power differences shown in
Figure~\ref{power_uneql_kappa}(c). A similar analysis is conducted for the closer concentration parameters
$\kappa_1 = 1.48 \neq \kappa_2 = 1.50$,
with results presented in Figure~\ref{power_uneql_kappa}(b).
In this setting, the test proposed by \cite{biswas2016comparison} shows a slight advantage for small sample sizes ($n = 20$),
whereas for moderate sample sizes ($n = 50$) the two procedures perform comparably,
as evidenced by the empirical power differences in
Figure~\ref{power_uneql_kappa}(d).

\section{Data Analysis}
\label{data_analyis}
Corneal incisions created during cataract surgery can induce changes in corneal curvature. This results in surgically induced astigmatism (SIA). In the present dataset, this effect was observed exclusively in cases undergoing small incision cataract surgery (SICS), irrespective of whether the nucleus was delivered using the VERTICS or SNARE technique. No measurable astigmatic change was detected following Conventional or Torsional Phacoemulsification. Consequently, subsequent data analyses are confined to SICS cases performed using the VERTICS and SNARE techniques.

A prospective, randomized, comparative interventional study was conducted at the Disha Eye Hospital and Research Centre, Barrackpore, West Bengal, India, between 2008 and 2010 \cite[see][]{bakshi2010evaluation}. The study included 40 eyes from 40 patients, who were randomly allocated into two equal groups. Twenty patients underwent small incision cataract surgery (SICS) using the SNARE technique, while the remaining twenty underwent SICS using the irrigating VERTICS technique. The dataset analyzed in this study comes with a single preoperative measurement of the astigmatism axis, followed by postoperative measurements at one month and three months. Three observations were missing at the three-month follow-up. These missing axis values were imputed using the circular mean of the available three-month postoperative measurements. Following imputation, the dataset contains complete observations for all 20 VERTICS and 20 SNARE cases. It enables a comprehensive analysis of postoperative astigmatic changes over time across the two surgical techniques. 
Given the small number of missing values, circular mean imputation was considered appropriate. However, in larger studies with a higher proportion of missing directional data, multiple-imputation approaches that explicitly account for the axial nature of astigmatism measurements would be more suitable.
Figures \ref{data_rose_plot}(a) and \ref{data_rose_plot}(b) present rose diagrams of the astigmatism axis, transformed by quadrupling the angles and reducing them modulo $2\pi$, at the three-month postoperative visit for the SNARE and VERTICS techniques, respectively.

\begin{figure}[t]
\centering
\subfloat[]{%
{\includegraphics[width=0.34\textwidth, height=0.34\textwidth]{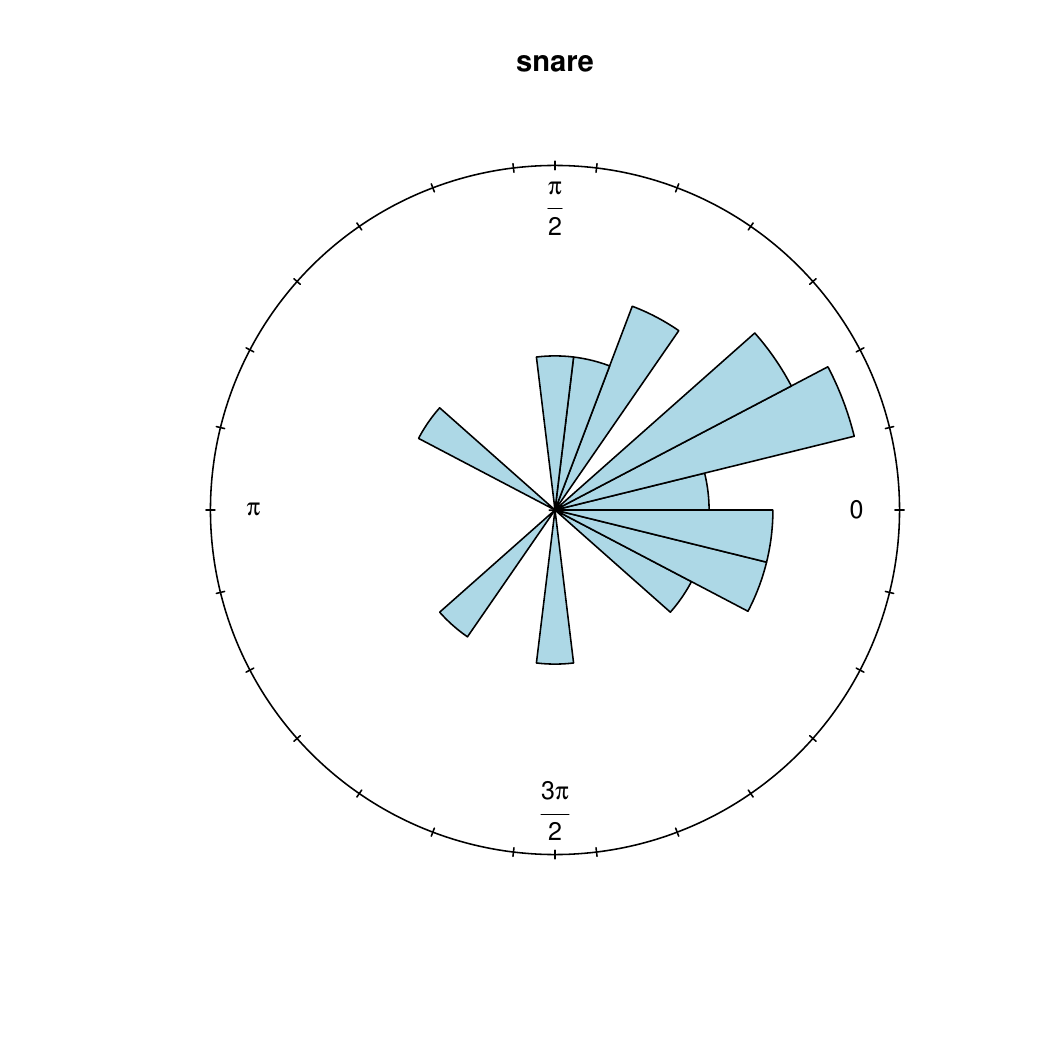}}}
\subfloat[]{%
{\includegraphics[width=0.34\textwidth, height=0.34\textwidth]{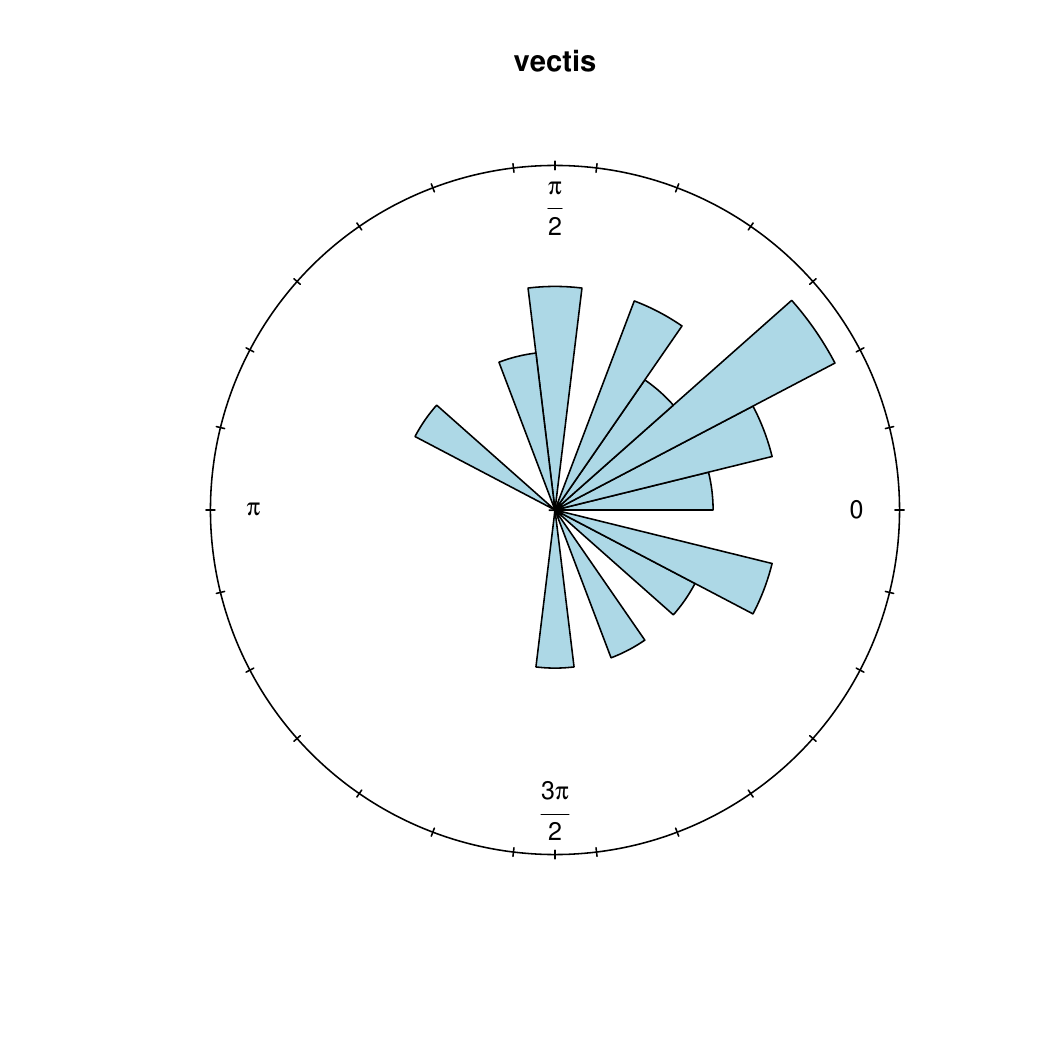}}}
\caption{Rose-diagram of SICS data using (a)  SNARE technique, and (b) irrigating VERTICS technique. }
\label{data_rose_plot}
\end{figure}

 Separate von Mises distributions were fitted to the astigmatism axis data for the two surgical techniques. For the SNARE technique, the estimated mean direction and concentration parameters were $\hat{\mu}_S = 0.3066$ radians and $\hat{\kappa}_S = 1.560$, respectively. For the irrigating VERTICS technique, the corresponding estimates were $\hat{\mu}_V = 0.5402$ radians and $\hat{\kappa}_V = 1.581$. Figure 3(b) of \cite{biswas2016comparison} illustrates the fitted von Mises densities for both techniques under assumptions of common concentration (solid curves) and unequal concentrations (dashed curves); an analogous plot can be generated for the present analysis. For the proposed testing, in neither case was the null hypothesis rejected, with p-values of 0.5891 for the test allowing unequal concentrations and 0.6204 for the test assuming equal concentrations. These results indicate no statistically significant difference between the SNARE and irrigating VERTICS techniques with respect to the axis of surgically induced astigmatism in the available data.
 More broadly, while the SNARE technique has been reported to yield superior immediate postoperative outcomes for certain clinical measures, long-term results are generally comparable between the two techniques \cite{bakshi2010evaluation}. Consistent with this, the SNARE technique appears numerically more favorable in terms of astigmatism axis alignment; however, this apparent advantage does not reach statistical significance.

%  Fitting two separate von Mises distributions to the astigmatism data we obtain that for SNARE technique the estimates of the parameters are $\hat{\mu}_S = 0.3066$ radian and $\hat{\kappa}_S = 1.560$, while for irrigating VERTICS technique the estimates are $\hat{\mu}_V = 0.5402$ radian and $\hat{\kappa}_V = 1.581$. Figure 3(b) in \cite{biswas2016comparison} provides the fitted von Mises densities for both the irrigating VERTICS and SNARE data with common $\kappa$ (solid lines) and unequal $\kappa$s (broken lines), a similar plot can be obtained for this work as well. Although these values of $\kappa$s are not significantly different by the tests of Mardia and Jupp \cite[see][p. 133]{mardia2000directional}, we have employed the tests for the equal concentration and unequal concentration on this dataset. In both cases, we are unable to reject the null hypothesis with p-value $0.5891$ and $0.6204$ for the test on unequal and equal concentrations, respectively. So there is no evidence of the SNARE technique being different from the irrigating VERTICS technique when the axis of
% the induced astigmatism is considered, as far as the given data is concerned. In general, when other
% responses are concerned the SNARE technique is known to give better immediate postoperative results
% than the irrigating VERTICS but the long term results are equivocal \cite[see][]{bakshi2010evaluation}. Even when the axis of astigmatism is concerned, the SNARE technique is ‘numerically’ better than the irrigating VERTICS
% technique, but the difference is not statistically significant.

\section{Discussion and Conclusion}
\label{discussion}

We have proposed a new hyperbolic geometric framework for two-sample inference on circular data, motivated by the analysis of angular biomarkers in biomedical studies.
By embedding the parameter space of the von-Mises distribution into the Poincar\'{e} disk, this work provides, to our knowledge, the first principled application of hyperbolic geometry to statistical inference for circular distributions.
The embedding establishes a continuous one-to-one correspondence between distributional parameters and points in Poincar\'{e} disk, thereby preserving the intrinsic structure of the model  and enabling meaningful distance-based comparisons.

The induced Poincar\'{e} distance yields a natural metric for assessing distributional differences and leads to permutation and bootstrap based tests accommodating both equal and unequal concentration parameters respectively.
Extensive simulation studies demonstrate that the proposed procedures maintain stable empirical size across a broad range of sample sizes and concentration levels, while achieving superior asymptotic power relative to existing methods.
Although competing tests may perform comparably in very small samples, the proposed approach consistently dominates as sample size increases, underscoring its efficiency and robustness.

The practical value of the methodology is illustrated through an analysis of astigmatism data from cataract surgery patients, following a clinically informed restructuring of the original dataset.
The results align with the simulation findings and show that the hyperbolic framework delivers interpretable and clinically relevant conclusions beyond those obtainable with conventional circular tests.

In summary, this study establishes hyperbolic geometry as a coherent and effective foundation for circular data analysis in medical research.
The proposed framework is readily extensible to more complex settings, including multi-sample inference and regression models, and offers a promising direction for the analysis of angular data in biostatistics.

%\newpage
\bibliographystyle{apalike}
\bibliography{buddha_bib_asta}
\newpage
\section{Appendix}
\label{appendix}
\subsection{Proof of Lemma \ref{uniquesness lemma} (Uniqueness of Minimizer).}

\begin{proof}
Consider the function $h_\xi(t) := d_{\mathbb{H}}(\xi, te^{i\mu_0})$ for fixed $\xi \in \mathbb{D}$. The Poincar\'{e} disk $(\mathbb{D}, d_H)$ is a strictly convex metric space. In such spaces the set $R_{\mu_0}$ forms a geodesic  ray from the origin toward the boundary of $\mathbb{D}$, which is a convex subset of $\mathbb{D}$.
 The squared distance function $t \mapsto [h_\xi(t)]^2$ is strictly convex in $t$ \cite[see][]{do1992riemannian}.
 Therefore, $h_\xi(t)$ is strictly convex on $[0, 1)$. Additionally, $h_\xi(t)$ is continuous on $[0,1)$ and $\lim_{t \to 1^-} h_\xi(t) = \infty$ (since the hyperbolic distance diverges as we approach the boundary). A strictly convex, continuous function on a convex domain achieves at most one minimum. Thus, $\mathcal{P}_{\mu_0}(\xi)$ is unique.
 \label{pf:uniqueness lemma}
\end{proof}

\subsection{Proof of Lemma \ref{lm: Projection zero} (Projection along zero direction)}
\label{pf:lm: Projection zero}

\begin{proof}
Following the Equation \ref{Poincare_metric} consider  the distance between any $\xi\in \mathbb{D}$ and a point $w_0(t)\in R_0=\{(t,0):0\le t<1\}$ 
\[
d_{\mathbb{D}}(\xi,w_0(t))
=
\cosh^{-1}
\!\left(
1+
\frac{2\bigl(|\xi|^2+t^2-2t\Re(\xi)\bigr)}
{(1-|\xi|^2)(1-t^2)}
\right).
\]
Taking $\cosh$ on both sides, we get
\[\cosh\!\big(d_{\mathbb{D}}(\xi,w_0(t))\big) = \left(
1+
\frac{2\bigl(|\xi|^2+t^2-2t\Re(\xi)\bigr)}
{(1-|\xi|^2)(1-t^2)}
\right).\]
Differentiating with respect to $t$ and  equating to $0$,  we obtain
\[\sinh(d_{\mathbb{H}}(\xi,w_0(t)) \cdot \frac{d}{dt} d_{\mathbb{D}}(\xi,w_0(t)) 
= \frac{4(1 - |\xi|^2)\big[ t(1 + |\xi|^2) - \Re(\xi)(1 + t^2) \big]}{(1 - |\xi|^2)^2 (1 - t^2)^2} =0\]
which leads to 
\[
\Re(\xi)\, t^2-(|\xi|^2+1)t+\Re(\xi)=0.
\]
The above  quadratic equation has the real root   
$$
t=
 \frac{(|\xi|^2+1)-\sqrt{(|\xi|^2+1)^2-4\Re(\xi)^2}}
{2\Re(\xi)}
$$
with a projection on $R_0$ as $$\mathcal{P}_{R_0}(\xi)
=
\left(
\min\!\left\{
1,\max\!\left\{
0,\;
\frac{
1+|\xi|^2
-
\sqrt{(1+|\xi|^2)^2-4~\Re(\xi)^2}
}{
2\,\Re(\xi)
}
\right\}
\right\},\,
0
\right).$$

\end{proof}

\subsection{Proof of Theorem \ref{thm:consistency} 
\label{pf:thm:consistency }
(Consistency of the Poincar\'{e} Distance-Based Permutation Test)}
\begin{proof}
The proof proceeds by establishing convergence of the test statistic to a non-zero constant under $H_1$ and exploiting the invariance of the permutation distribution under $H_0$.

\vspace{0.5em}
For each group $g = 1, 2$, the von Mises distribution with $\kappa_g > 0$ is identifiable and belongs to a regular exponential family. Consequently, the maximum likelihood estimators satisfy
\begin{equation*}
\hat{\mu}_g \xrightarrow{P} \mu_g, \quad \hat{\kappa}_g \xrightarrow{P} \kappa_g, \quad \text{as } n_g \to \infty.
\end{equation*}

In particular, since $\bar{R}_g \xrightarrow{P} A_1(\kappa_g)$ by the Law of Large Numbers (where the expectation $\mathbb{E}[e^{i\Theta_g}] = A_1(\kappa_g) e^{i\mu_g}$ for $\Theta_g \sim \text{VM}(\mu_g, \kappa_g)$ is standard in directional statistics; see \cite{Mardia_2000}, and $A_1(\kappa) = I_1(\kappa)/I_0(\kappa)$ is strictly increasing and continuous on $(0, \infty)$, the estimator $\hat{\kappa}_g = A_1^{-1}(\bar{R}_g)$ is consistent by the continuous mapping theorem. In ensures the  consistency of the parameter estimators.

The mapping $\xi: [0, 2\pi) \times [0, \infty) \to \mathbb{D}$ defined by $\xi(\mu, \kappa) = r(\kappa) e^{i\mu}$ as specified in Equation \ref{Poincare_disk} is deterministic and continuous. Hence, by the continuous mapping theorem,
$\hat{\xi}_g = \xi(\hat{\mu}_g, \hat{\kappa}_g) \xrightarrow{P} \xi(\mu_g, \kappa_g) = \xi_g, \quad g = 1, 2,$
satisfying the consistency of the hyperbolic embedding. Uniqueness and continuity of the distance functional is established  in Lemma \ref{uniquesness lemma}. 
By the uniqueness of the minimizer and continuity of $d_{\mathbb{H}}$, the map $d_{R_0}: \mathbb{D} \to \mathbb{R}^+$ is continuous on $\mathbb{D}$. Therefore,
$ d_{R_0}(\hat{\xi}_g)  \xrightarrow{P} d_{R_0}(\xi_g), \quad g = 1, 2. $

Now by  Slutsky's theorem,
$T = \bigl|d_{R_0}(\hat{\xi}_1) - d_{R_0}(\hat{\xi}_2)\bigr| \xrightarrow{P} \Delta := \bigl|d_{R_0}(\xi_1) - d_{R_0}(\xi_2)\bigr|.$ Under $H_1$, we have $\Delta \neq 0$ by assumption, so for any $\varepsilon > 0$,
$\mathbb{P}_{H_1}(T > \varepsilon) \to 1 \quad \text{as } \min\{n_1, n_2\} \to \infty.$
 
Under $H_0$, the joint distribution of the pooled sample $\{\Theta_{1i}\}_{i=1}^{n_1} \cup \{\Theta_{2i}\}_{i=1}^{n_2}$ is invariant under permutations of the group labels. By a standard result in permutation test theory (see \cite{PhipsonSmyth},  \cite{romano1989bootstrap}), the permutation distribution of $T$ and the critical value $c^{\text{perm}}_\alpha$ (the $(1-\alpha)$-quantile) satisfy:
\begin{equation*}
\mathbb{P}_{H_0}\bigl(T > c^{\text{perm}}_\alpha\bigr) \to \alpha \quad \text{as } \min\{n_1, n_2\} \to \infty.
\end{equation*}

Under $H_1$, $T \xrightarrow{P} \Delta > 0$. The permutation critical value $c^{\text{perm}}_\alpha$ is bounded by the asymptotic quantile under $H_0$, which remains $O_P(1)$ and hence strictly less than $\Delta$ for large $n_1, n_2$. Therefore,
\begin{equation*}
\mathbb{P}_{H_1}\bigl(|T| > c^{\text{perm}}_\alpha\bigr) \to 1 \quad \text{as } \min\{n_1, n_2\} \to \infty,
\end{equation*}
establishing consistency of the permutation test. 
\end{proof}

\end{document}